\declaretheorem[name=Definition,style=definition,numberwithin=section]{definition}
\declaretheorem[name=Example,style=definition,sibling=definition]{example}
\declaretheorem[name=Remark,style=definition,sibling=definition]{remark}
\declaretheorem[name=Convention,style=definition,sibling=definition]{convention}
\declaretheorem[name=Theorem,sibling=definition]{theorem}
\declaretheorem[sibling=definition]{lemma}
\newenvironment{proofsketch}{%
	\proof}{\endproof}
\newcommand{\ALC}{\mathcal{ALC}}
\newcommand{\EXP}{\textsc{ExpTime}\xspace}
\newcommand{\NEXP}{\textsc{NExpTime}\xspace}
\newcommand{\PSPACE}{\textsc{PSpace}\xspace}
\title{Non-expansive Fuzzy $\ALC$}
\author{Stefan Gebhart \and Lutz Schr\"oder \And Paul Wild\\
	\affiliations
	Friedrich-Alexander-Universit\"at Erlangen-N\"urnberg\\
	\emails
	\{stefan.gebhart, lutz.schroeder, paul.wild\}@fau.de}
\begin{document}
	\maketitle
	
	\pagestyle{empty}
	\begin{abstract}
		Fuzzy description logics serve the representation of vague
		knowledge, typically letting concepts take truth degrees in the unit
		interval. Expressiveness, logical properties, and complexity vary
		strongly with the choice of propositional base. The \L{}ukasiewicz
		propositional base is generally perceived to have preferable logical
		properties but often entails high complexity or even
		undecidability. Contrastingly, the less expressive Zadeh
		propositional base comes with low complexity but entails essentially
		no change in logical behaviour compared to the classical case. To
		strike a balance between these poles, we propose \emph{non-expansive
			fuzzy $\ALC$}, in which the Zadeh base is extended with
		\L{}ukasiewicz connectives where one side is restricted to be a
		rational constant, that is, with constant shift operators. This
		allows, for instance, modelling dampened inheritance of properties
		along roles. We present an unlabelled tableau method for
		non-expansive fuzzy $\ALC$, which allows reasoning over general
		TBoxes in \EXP like in two-valued~$\ALC$.
		
	\end{abstract}
	\section{Introduction}
	Fuzzy description logics (fuzzy DLs; see~\cite{LukasiewiczStraccia08}
	for an overview) model vague knowledge by replacing the classical
	two-valued truth set with more fine-grained alternatives, most
	commonly with the unit interval. (Other choices are possible, such as
	quantales, e.g.~\cite{WildSchroder21}, or MV-algebras,
	e,g.~\cite{FlaminioEA13}). Both concepts and roles may be `fuzzified'
	in this sense, allowing for the appropriate representation of concepts
	such as `tall person' or `fast car', and of roles such as `likes' or
	`supports'. In the design of fuzzy DLs, an important factor is the
	choice of propositional base, that is, of the set of propositional
	connectives and their interpretation over the given truth set. Over
	the unit interval $[0,1]$, standard options include Zadeh,
	\L{}ukasiewicz, Gödel, and product logic~\cite{LukasiewiczStraccia08},
	with the Zadeh and the \L{}ukasiewicz base having received a
	comparatively large share of the overall
	attention~\cite{BobilloStraccia11,BorgwardtPenaloza17,BouEA11,Hajek05,KulackaEA13,StoilosEA07,Straccia05}.
	
	In the Zadeh base, conjunction and disjunction are just interpreted as
	minimum and maximum, respectively. This has intuitive appeal but
	closer analysis has shown that under this interpretation,  the logic in fact remains very close to its two-valued correspondent. For
	instance, the problem of deciding whether an $\ALC$ concept is
	satisfiable with truth degree at least~$p$ in Zadeh semantics is
	equivalent to satisfiability in two-valued semantics if $p\ge 0.5$,
	and largely trivial, in particular decidable in linear time, if
	$p<0.5$~\cite{BonattiTettamanzi03,KellerHeymans09,Straccia01}. The
	\L{}ukasiewicz base, which uses the additive structure of $[0,1]$ to
	interpret disjunction and conjunction, does not suffer from such
	deficiencies, and is generally perceived to have favourable logical
	properties (in fact, it is determined up to isomorphism by a set of
	desirable properties including residuation and
	axiomatizability~\cite{KunduChen98}). On the other hand, the
	\L{}ukasiewicz base comes with an increase in computational
	hardness. Indeed, the best known upper bound for concept
	satisfiability in \L{}ukasiewicz fuzzy $\ALC$ is
	$\NEXP$~\cite{KulackaEA13,Straccia05} (compared to $\PSPACE$ for
	two-valued $\ALC$~\cite{Ladner77}), and reasoning over general TBoxes
	in \L{}ukasiewicz fuzzy $\ALC$ is even undecidable~\cite{BaaderPenaloza11}.
	
	In the present work, we aim to strike a balance between these poles,
	proposing \emph{non-expansive fuzzy $\ALC$} as a logic that offers
	more expressive power than Zadeh fuzzy $\ALC$ but retains the same
	complexity as two-valued $\ALC$. As the propositional base, we use an
	extension of the Zadeh base where we allow rational truth constants
	and \L{}ukasiewicz connectives with one argument restricted to be a
	truth constant; phrased more simply, the latter effectively just means
	that we include constant shift operators $(-)\oplus c$ where~$\oplus$
	denotes truncated addition and~$c$ is a rational constant (one can
	then also express truncated subtraction $(-)\ominus c$). For instance,
	the TBox axiom
	\begin{equation*}
		\mathsf{Rich}\sqsubseteq \forall\,\mathsf{hasChild}.\,(\mathsf{Rich}\ominus 0.1)
	\end{equation*}
	asserts (debatably, of course) that children of rich people tend to be
	even richer than their parents. This propositional base has been
	widely used in modal logics that characterize behavioural distances, for
	instance, on probabilistic~\cite{BreugelWorrell05} or
	fuzzy~\cite{WildSchroderEtAl18} systems, and in particular does ensure
	non-expansiveness of the logic w.r.t.\ behavioural distance; hence our
	choice of nomenclature.
	
	Our main technical result on non-expansive fuzzy $\ALC$ is
	decidability of the main reasoning problems in the same complexity as
	for two-valued $\ALC$; most notably, threshold satisfiability over
	general TBoxes is (only) $\EXP$-complete, in sharp contrast with the
	undecidability encountered for full \L{}ukasiewicz fuzzy $\ALC$, and
	in spite of the fact that the semantics of general concept inclusions
	is pointwise inequality and thus corresponds to validity of
	\L{}ukasiewicz implication. We base this result on an unlabelled
	tableau calculus. We construct tableaux using an algorithm that
	follows the global caching principle~\cite{GoreNguyen07} and thus can
	terminate before the tableau has been fully expanded, offering a
	perspective for practical scalability.
	
	Proofs are sometimes omitted or only sketched in the main body; full proofs can be found in the appendix. 
	
	For the published version of this paper, see \cite{ijcai2025p502}.
	
	\paragraph{Further related work} 
	
	The idea of using explicit (rational) truth constants comes from (rational) Pavelka logic~\cite{Hajek95,Pavelka79}
	For \emph{finite-valued} \L{}ukasiewicz fuzzy $\ALC$, the threshold
	satisfiability problem is
	$\PSPACE$-complete~\cite{BouEA11}.
	Reasoning in fuzzy $\ALC$ with product semantics has also been shown to be computationally hard: Validity over the empty TBox is decidable but only a lower bound is known \cite{CeramiEsteva22} and undecidable under general concept inclusions \cite{BaaderPenaloza11}. 
	For a more general classification of the decidability of description
	logics under TBoxes depending on the propositional base, see, for
	example, \cite{Baader2017}.  Under the G\"odel propositional base,
	threshold satisfiability (without a TBox) remains in $\PSPACE$
	\cite{CaicedoEA17}, and decidability is retained even for expressive
	fuzzy description logics \cite{BorgwardtEA16}.  Reasoning in Zadeh
	fuzzy $\ALC$ becomes more involved in presence of ABoxes with
	explicit thresholds. An existing tableau algorithm for this
	case~\cite{StoilosEA06} (which does cover general concept inclusions
	in the sense indicated above) is quite different from ours; in
	particular, it updates labels of tableau nodes after their creation,
	and relies on almost all relevant threshold values being explicitly
	mentioned in the ABox. Reasoning methods for Zadeh fuzzy DLs have
	been extended to highly expressive
	DLs~\cite{StoilosEA07,StoilosEA14}.  In the absence of TBoxes, there
	is a tableau algorithm for fuzzy $\ALC$ for any continuous t-norm
	\cite{Baader2015}. The tableaux algorithm for \L{}ukasiewicz fuzzy
	$\ALC$ necessarily applies only the case without TBoxes, and works
	in a different way from ours, in particular is labelled. A
	preliminary variant of the tableau calculus for non-expansive fuzzy
	$\ALC$ without TBoxes has featured in an undergraduate thesis
	supervised by the second author~\cite{Hermes23}.
	\section{Non-Expansive Fuzzy $\ALC$}
	
	We proceed to introduce the fuzzy DL \emph{non-expansive fuzzy
		$\ALC$}. As indicated earlier, it follows the Zadeh interpretation
	of conjunction~$\sqcap$ and disjunction~$\sqcup$ as minimum and
	maximum, but includes constant shifts $(-)\ominus c$ and $(-)\oplus c$
	for rational constants as additional propositional operators. Formal
	definitions are as follows.
	\begin{convention}
		Throughout, let ${\triangleleft} \in \{ <, \leq \}$,
		${\triangleright} \in \{ >, \geq \}$ and
		${\bowtie} \in \{<, \leq, >, \geq\}$. Whenever we talk about
		constants we refer to rational numbers, usually in the unit interval. We encode a constant by taking the binary representation of the numerator and denominator of its representation as an irreducible fraction.
	\end{convention}
	\begin{definition}
		\begin{enumerate}[wide]
			\item A \emph{signature} of a description language consists of a set $\mathsf{N_C}$ of \emph{atomic concepts} and a set $\mathsf{N_R} := \{R_i \mid i \in I\}$ of \emph{role names} for some index set $I$.
			\item Let $(\mathsf{N_C}, \mathsf{N_R})$ be a signature. Then \emph{concepts} $C,D,\dots$ of
			\emph{non-expansive fuzzy $\ALC$} are generated by the grammar
			\begin{equation*}
				C, D ::= p \mid c \mid \lnot C \mid C \ominus c \mid C \sqcap D \mid \exists R . C
			\end{equation*}
			where $c \in [0,1]$ is a constant, $p \in \mathsf{N_C}$ is an atomic concept and $R \in \mathsf{N_R}$ is a role name.
			\item A (fuzzy) \emph{interpretation} $\mathcal{I}$ consists of a set $\Delta^\mathcal{I} \neq \emptyset$ of \emph{individuals} , a map $p^\mathcal{I} : \Delta^\mathcal{I} \rightarrow [0,1]$ for every atomic concept $p \in \mathsf{N_C}$, and a map $R^\mathcal{I} : \Delta^\mathcal{I} \times \Delta^\mathcal{I} \rightarrow [0,1]$ for every role name $R \in \mathsf{N_R}$.
			\item Let $\mathcal{I}$ be an interpretation and $x \in \Delta^\mathcal{I}$ an individual. Then we define the valuation of concepts in $x$ with respect to the interpretation $\mathcal{I}$ recursively by
			\begin{equation*}
				c^\mathcal{I}(x)=c \qquad
				(\lnot D)^\mathcal{I}(x)=1-D^\mathcal{I}(x)
			\end{equation*}
			\begin{equation*}
				(C \ominus c)^\mathcal{I}(x)=\max(C^\mathcal{I}(x)-c, 0)
			\end{equation*}
			\begin{equation*}
				(C \sqcap D)^\mathcal{I}(x) = \min(C^\mathcal{I}(x), D^\mathcal{I}(x))
			\end{equation*}
			\begin{equation*}
				(\exists R . C)^\mathcal{I}(x) = \sup_{y \in \Delta^\mathcal{I}}\{\min(R^\mathcal{I}(x,y), C^\mathcal{I}(y))\} .
			\end{equation*}
			\item A \emph{concept assertion} is an inequality of the form $C \mathrel{\bowtie} c$. A \emph{(tableau) sequent} is a finite set of concept assertions.
			\item We define the \emph{(syntactic) size} $\lvert C \rvert$ of a concept $C$ inductively by
			\begin{equation*}
				\lvert p \rvert = 1 \quad \lvert \textstyle\frac{a}{b} \rvert = \log_2{a}+\log_2{b}
			\end{equation*}
			\begin{equation*}
				\lvert \lnot C \rvert = \lvert C \rvert + 1 \quad \lvert C \ominus c \rvert = \lvert C \rvert + \lvert c \rvert + 1
			\end{equation*}
			\begin{equation*}
				\lvert C \sqcap D \rvert = \lvert C \rvert + \lvert D \rvert + 1 \quad \lvert \exists R . C \rvert = \lvert C \rvert + 1.
			\end{equation*}
			\item The \emph{(syntactic) size} of a concept assertion $C \mathrel{\bowtie} c$ is $|C|+|c|$, and that of a sequent is the sum of the syntactic sizes of its elements.
		\end{enumerate}
	\end{definition}
	\noindent As usual, we define $C\sqcup D=\neg(\neg C\sqcap\neg D)$ and
	$\forall R . C = \lnot \exists R . \lnot C$. We also define $C \oplus c = \lnot ((\lnot C) \ominus c)$. Furthermore we call $\exists R . C \mathrel{\triangleright} c$ an existential restriction and $\exists R . C \mathrel{\triangleleft} c$ a universal restriction.
	\begin{remark}\label{rem:concepts}
		The concept language of non-expansive fuzzy $\ALC$ agrees
		essentially with a fuzzy modal logic featuring in a quantitative
		modal characterization theorem~\cite{WildSchroderEtAl18}. We could also
		equivalently define non-expansive fuzzy $\ALC$ as \L{}ukasiewicz
		$\ALC$ where we we keep the weak connectives, i.e. the Zadeh
		connectives, and require that at least one of the arguments of the
		\L{}ukasiewicz connectives must be a constant. Recall here that
		in \L{}ukasiewicz semantics, we have, e.g., strong disjunction $\oplus$ interpreted as addition, so our concepts $C\oplus c$ are effectively strong disjunctions with constants.
	\end{remark}
	\begin{remark}
		For most reasoning problems in fuzzy $\ALC$ with a plain Zadeh base, it has been shown  not only that they have the same complexity as the classical counterpart $\ALC$  but also that logical consequence  remains mostly the same \cite{BonattiTettamanzi03,KellerHeymans09,StoilosEA07,Straccia01}. It has also been noted that non-implication of falsity is not a useful notion of knowledge base consistency in this setting, and is in deterministic linear time while threshold satisfiability is $\EXP$-complete~\cite{BonattiTettamanzi03}.
		Non-expansive fuzzy $\ALC$ employs a more expressive concept language than classical $\ALC$ and as such is not formally subject to such phenomena; we aim to illustrate in Example~\ref{expl:tboxes} that there is also a practical gain in expressiveness over Zadeh fuzzy $\ALC$, in particular in connection with general TBoxes.
	\end{remark}
	\begin{remark}
		In analogy to two-valued notions of bisimilarity, one can give
		a natural fixpoint definition of behavioural distance between
		individuals in interpretations~$\mathcal I$. Under this
		distance, the maps
		$C^{\mathcal I}\colon\Delta^{\mathcal I}\to [0,1]$ become
		non-expansive (and in fact even characterize behavioural
		distance), a property that fails in \L{}ukasiewicz
		semantics~\cite{WildSchroderEtAl18}. This motivates the
		nomenclature `non-expansive fuzzy $\ALC$'.
	\end{remark}
	\begin{definition}
		\begin{enumerate}[wide]
			\item A \emph{TBox} $\mathcal{T}$ is a set of \emph{general concept inclusions} or \emph{GCIs} for short, which are of the form $C \sqsubseteq D$. We say that an interpretation $\mathcal{I}$ satisfies the TBox $\mathcal{T}$ if for every GCI $C \sqsubseteq D$ we have:
			\begin{equation*}
				\forall x \in \Delta^\mathcal{I} : C^\mathcal{I}(x) \leq D^\mathcal{I}(x)
			\end{equation*}
			\item We say that a concept assertion $C \mathrel{\bowtie} c$ is $\mathcal{T}$-\emph{satisfiable} if there exists an interpretation that satisfies $\mathcal{T}$ and that has an individual $x \in \Delta^\mathcal{I}$ such that $C^\mathcal{I}(x) \mathrel{\bowtie} c$. We then also say that $C \mathrel{\bowtie} c$ is \emph{satisfied} in $x$.
			\item We say that a sequent $\Gamma$ is $\mathcal{T}$-\emph{satisfiable} if there exists an interpretation that satisfies $\mathcal{T}$ and that has an individual $x$ such that every concept assertion of $\Gamma$ is satisfied in $x$. We then also say that $\Gamma$ is \emph{satisfied} in $x$.
			\item We say a concept assertion $C \mathrel{\bowtie} c$ is $\mathcal{T}$-\emph{valid} if for every interpretation $\mathcal{I}$ that satisfies $\mathcal{T}$ and all individuals $x \in \Delta^\mathcal{I}$ we have $C^\mathcal{I}(x) \mathrel{\bowtie} c$.
			\item We say a sequent $\Gamma$ is $\mathcal{T}$-\emph{valid} if every concept assertion of $\Gamma$ is $\mathcal{T}$-\emph{valid}.
			\item We define the \emph{(syntactic) size} of a GCI by the sum of the syntactic sizes of its concepts and the syntactic size of a TBox as the sum of the syntactic sizes of its elements.
		\end{enumerate}
	\end{definition}
	\begin{remark}
		There is also a notion of fuzzy GCIs of the form $C \sqsubseteq D \geq p$~\cite{BorgwardtEA15}. In \L{}ukasiewicz semantics, this GCI is satisfied by an interpretation $\mathcal{I}$ iff for all $x \in \Delta^\mathcal{I}$ we have $(C \Longrightarrow D)^\mathcal{I}(x) = \min(1, 1-C^\mathcal{I}(x)+D^\mathcal{I}(x)) \geq p$. However this is clearly equivalent to $\min(1, D^\mathcal{I}(x)+(1-p)) \geq C^\mathcal{I}(x)$, which would be the GCI $C \sqsubseteq (D \oplus (1-p))$ in non-expansive fuzzy $\ALC$. Thus we can restrict ourselves to just handling regular GCIs.
	\end{remark}
	In order to ease working with varying comparison operators, we define the following operators to turn inequalities around and to turn strict inequalities into weak inequalities and vice versa:
	\begin{definition}
		Let ${\bowtie} \in \{<, \leq, >, \geq\}$. We define:
		\begin{equation*}
			\mathrel{\bowtie^\circ} :=
			\begin{cases}
				<, & \text{if } {{\bowtie}}={>}\\
				\leq, &\text{if }{{\bowtie}}={\geq}\\
				>, &\text{if }{{\bowtie}}={<}\\
				\geq, &\text{if }{{\bowtie}}={\leq}
			\end{cases}
			\qquad
			\mathrel{\overline{\bowtie}} :=
			\begin{cases}
				\geq, &\text{if }{{\bowtie}}={>}\\
				>, &\text{if }{{\bowtie}}={\geq}\\
				\leq, &\text{if }{{\bowtie}}={<}\\
				<, &\text{if }{{\bowtie}}={\leq}
			\end{cases}
		\end{equation*}
	\end{definition}
	\noindent Clearly, $C \mathrel{\bowtie} c$ is $\mathcal{T}$-valid iff $C \mathrel{\overline{\bowtie}^\circ} c$ is not $\mathcal{T}$-satisfiable. So in order to prove the $\mathcal{T}$-validity of $C \geq c$ or $C \leq c$, we have to check $C < c$ or $C > c$ for $\mathcal{T}$-satisfiability respectively.
	\begin{remark}
		As noted already for the case of Zadeh $\ALC$~\cite{StoilosEA06},
		TBoxes cause substantial additional difficulties in reasoning
		algorithms. One reason for this additional difficulty is that in
		Zadeh-type logics, TBoxes cannot be internalized as valid
		implications: The concept $\neg C\sqcup D$ is satisfied by all
		individuals in an interpretation iff whenever $C$ has value $>0$,
		then $D$ has value~$1$, which is not equivalent to satisfaction of
		the GCI $C\sqsubseteq D$. Contrastingly, in \L{}ukasiwicz semantics,
		the strong disjunction (cf.\ Remark~\ref{rem:concepts})
		$\neg C\oplus D$ holds in every individual of an interpretation iff
		the interpretation satisfies the GCI $C\sqsubseteq D$; that is, we
		can regard TBoxes as demanding satisfaction of top-level
		\L{}ukasiewicz implications. 
	\end{remark}
	\begin{example}\label{expl:tboxes}
		\begin{enumerate}[wide]
			\item To better illustrate and understand the semantics of non-expansive fuzzy $\ALC$, we begin with the following example:
			\begin{gather*}
				A \sqsubseteq \forall R.\, (A \ominus 0.2) \\
				A \ominus 0.2 \sqsubseteq B \ominus 0.3
				\quad\quad
				B \sqsubseteq (\forall R.\, B) \ominus 0.2
			\end{gather*}
			The $\ominus$ on the right hand side of the GCI $A \ominus 0.2 \sqsubseteq B \ominus 0.3$ makes it so~$B$ has to be bigger than the left hand side by at least $0.3$ and on the other hand the $\ominus$ on the left hand side decreases the value of~$A$ which means that the right hand side has to be at most $0.2$ smaller than~$A$. Combined, this tells us that the value of $B$ is bigger than that of~$A$ by at least $0.1$. So we could reformulate this as just $A \sqsubseteq B \ominus 0.1$.
			The GCI $A \sqsubseteq \forall R.\,(A \ominus 0.2)$ tells us that each $R$-successor either has a successorship degree smaller than or equal to~$1$ minus the value of~$A$ or the value of~$A$ in this successor is bigger than the value of~$A$ at this current individual by at least $0.2$.
			The GCI $B \sqsubseteq (\forall R.\, B) \ominus 0.2$ on the other hand tells us that each $R$-successor either has a successorship degree smaller than or equal to $0.8$ minus the value of $B$ or the value of $B$ in this successor is bigger than the value of $B$ of the current individual by at least $0.2$.
			An example of an inference would then be that $(\lnot(A \ominus 0.5)) \sqcup ((\forall R. B) \ominus 0.2) \geq 0.8$ is $\mathcal{T}$-valid; this makes use of the GCIs $A \ominus 0.2 \sqsubseteq B \ominus 0.3$ (or rewritten as $A \oplus 0.1 \sqsubseteq B$) and $B \sqsubseteq (\forall R. B) \ominus 0.2$ and states that either $A$ is smaller than $0.7$ or $B$ is equal to $1$ in all $R$-successors with non-zero successorship degree.
			\item We model social influences on opinions and beliefs (as a disclaimer, we note that neither this example nor the next one are meant as realistic formalizations of the respective domains): In this model, individuals are people, the single role is the interaction $\mathsf{IsFriendsWith}$ (abbreviated as $\mathsf{IFW}$) and as atomic concepts we take opinions people can hold to some degree. Our TBox could then look like this:
			\begin{gather*}
				\forall \mathsf{IFW} . \mathsf{FootballFan} \sqsubseteq \mathsf{FootballFan}\\
				\forall \mathsf{IFW} . (\lnot\mathsf{FootballFan} \ominus 0.4) \sqsubseteq \lnot \mathsf{FootballFan} \oplus 0.2\\
				\exists \mathsf{IFW} . (\mathsf{SportsFan} \ominus 0.3) \sqsubseteq \mathsf{SportsFan} \oplus 0.2\\
				\mathsf{FootballFan} \sqsubseteq \mathsf{SportsFan}\\
				\mathsf{FootballFan} \ominus 0.3 \sqsubseteq \forall \mathsf{IFW} . (\mathsf{FootballFan} \oplus 0.2)
			\end{gather*}
			We can then reason about how much people like sports or football based on who they interact with and to what degree. An inference we could then make (using the first and fourth GCIs) is that $(\exists \mathsf{IFW} . (\lnot \mathsf{FootballFan}) \oplus 0.4) \sqcup \mathsf{SportsFan} \geq 0.7$ is $\mathcal{T}$-valid, which means that either you have a close friend that is not really interested in football or you are a very big sports fan.
			\item We model the influence of scientists: As individuals we have scientists, as roles we have citation and collaboration relations $\mathsf{CitedBy}$ and $\mathsf{CollaboratedWith}$ (or $\mathsf{CW}$ for short) $\mathsf{Influence}$ of a scientist. Here the fuzziness of the roles represents the frequency of collaborations or citations. Our TBox could then look like this:
			\begin{gather*}
				\exists \mathsf{CitedBy} . \mathsf{Influence} \sqsubseteq \mathsf{Influence} \oplus 0.2\\
				\exists \mathsf{CW} . (\mathsf{Influence} \ominus 0.4) \sqsubseteq \mathsf{Influence}\\
				\forall \mathsf{CW} . \mathsf{Influence} \sqsubseteq \mathsf{Influence}\\
				\mathsf{Influence} \sqsubseteq \exists \mathsf{CitedBy} . (\mathsf{Influence} \oplus 0.3)
			\end{gather*}
			We can then reason about how much citations and collaborations affect influence. 
			For instance, from the above we may infer (using the second and fourth GCIs) that $(\forall \mathsf{CW} . (\lnot \mathsf{Influence} \oplus 0.4) \oplus 0.6) \sqcup (\exists \mathsf{CitedBy} . (\mathsf{Influence} \oplus 0.3)) \geq 0.8$ is $\mathcal{T}$-valid, which means the influence of collaborators of a scientist has an impact on the citations of the scientist.
		\end{enumerate}
	\end{example}
	\section{Tableaux Calculus for TBoxes}
	Having introduced our language and the concept of $\mathcal{T}$-satisfiability, we now construct an unlabelled tableau calculus and prove $\EXP$-completeness of determining if a sequent is $\mathcal{T}$-satisfiable.
	\begin{table*}
		\centering
		\begin{tabular}{c}
			\toprule
			Tableau Rules \\
			\midrule
			\vspace{1em}
			$(\text{Ax } 1) \  \frac{S,p \mathrel{\triangleright} c}{\bot} \quad (\text{if } c \mathrel{\overline{\triangleright}} 1, p \in \mathsf{N_C}) 
			\qquad (\text{Ax } 0) \  \frac{S,p \mathrel{\triangleleft} c}{\bot} \quad(\text{if } c \mathrel{\overline{\triangleleft}} 0, p \in \mathsf{N_C}) 
			\qquad (\text{Ax } c) \   \frac{S, c \mathrel{\triangleleft} d}{\bot} \quad (\text{if } c \mathrel{\overline{\triangleleft^\circ}} d) $\\
			\vspace{1em}
			$(\text{Ax } p) \   \frac{S,p \mathrel{\triangleleft} c, p \mathrel{\triangleright} d}{\bot} \quad (\text{if } c \mathrel{\blacktriangleleft_{({\triangleleft}, {\triangleright})}} d, p \in \mathsf{N_C})
			\qquad (\sqcap \mathrel{\triangleright}) \   \frac{S, C \sqcap D \mathrel{\triangleright} c}{S, C \mathrel{\triangleright} c, D \mathrel{\triangleright} c} 
			\qquad (\sqcap \mathrel{\triangleleft})  \  \frac{S,C \sqcap D \mathrel{\triangleleft} c}{S, C \mathrel{\triangleleft} c \quad S, D \mathrel{\triangleleft} c}$\\
			\vspace{1em}
			$(\lnot \mathrel{\bowtie}) \   \frac{S,\lnot C \mathrel{\bowtie} c}{S, C \mathrel{\bowtie^\circ} 1-c} \qquad (\ominus \mathrel{\triangleleft}) \   \frac{S,C \ominus c \mathrel{\triangleleft} d}{S,C \mathrel{\triangleleft} d + c, d \mathrel{\triangleleft^\circ} 0}
			\qquad (\ominus \mathrel{\triangleright}) \   \frac{S,C \ominus c \mathrel{\triangleright} d}{S,C \mathrel{\triangleright} d + c} \quad (\text{if } d \mathrel{\overline{\triangleright}} 0)$\\
			$(\exists R)\  \frac{S, \{ \exists R . D_j \mathrel{\triangleleft_j} d_j \mid 1 \leq j \leq n\}, \exists R . C \mathrel{\triangleright} c}{\{D_j \mathrel{\triangleleft_j} d_j \mid d_j \mathrel{\blacktriangleleft_{({\triangleleft_j}, {\triangleright})}} c, j \in \{1, \ldots, n\}\}, C \mathrel{\triangleright} c, T \geq 1} \quad (\text{if } c \mathrel{\overline{\triangleright}} 0 \text{ and }S \text{ does not contain any } \exists R . D \mathrel{\triangleleft} d)$\\
			\bottomrule
		\end{tabular}
		\caption{Tableau Calculus for $T \geq 1$ and $\Gamma$}
		\label{tab:tableau}
	\end{table*}
	\noindent We start by finding a more easily computable notion of $\mathcal{T}$-satisfiability: Let $A \sqsubseteq B$ be a GCI from a TBox $\mathcal{T}$ and $\Gamma$ be a sequent. Then for any interpretation~$\mathcal{I}$, we have that $A \sqsubseteq B$ is satisfied in~$\mathcal{I}$ iff for all individuals~$x$, there exists a constant~$c$ such that $A^\mathcal{I}(x) \leq c$ and $B^\mathcal{I}(x) \geq c$. The latter formulas can be rewritten as $((\lnot A \oplus c) \sqcap (B \oplus (1-c)))^\mathcal{I}(x) \geq 1$. Checking this formula for all possible $c$ would not yield a terminating algorithm, however as it turns out it suffices to check this for finitely many constants:
	\begin{definition}\label{def:associatedconceptassertion}
		Let $\mathcal{T} = \{C_i \sqsubseteq D_i \mid i=1, \ldots, n\}$ be a TBox
		and $\Gamma$ be a sequent. Let $Z$ be the intersection of the unit
		interval and the additive subgroup of the rationals generated by $1$ and the
		constants appearing in $\Gamma$ and $\mathcal{T}$. Put
		$\epsilon = \frac{1}{2}\min (Z \setminus \{0\})$ and
		$Z' = Z \cup \{z+\epsilon \mid z \in Z \setminus\{1\}\}$. The \emph{concept
			assertion associated to $\mathcal{T}$ and $\Gamma$} is $T \geq 1$
		where
		$T = (\bigsqcap_{i=1}^n \bigsqcup_{z \in Z'}(\lnot C_i \oplus z) \sqcap
		(D_i \oplus (1-z)))$.
	\end{definition}
	\begin{lemma}\label{lemma:interpretationrestriction}
		Let $\mathcal{T}$ be a TBox and let $\Gamma$ be a sequent. Then $\Gamma$ is satisfiable under $\mathcal{T}$ iff there exists an interpretation where $\Gamma$ is satisfied by some individual and each individual satisfies $T \geq 1$ where $T$ is the  concept assertion associated to~$\mathcal{T}$ and~$\Gamma$.
	\end{lemma}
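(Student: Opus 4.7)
The plan is to prove the two directions separately. The $(\Leftarrow)$ direction is a direct unfolding of the definition of $T$; the $(\Rightarrow)$ direction requires constructing a ``quantized'' version of the given interpretation whose concept values lie in $Z'$.

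For $(\Leftarrow)$: Suppose $\mathcal{I}$ satisfies $\Gamma$ at some $x_0$ and $T^\mathcal{I}(x) \geq 1$ for every $x \in \Delta^\mathcal{I}$. I would compute directly that $((\lnot C_i \oplus z) \sqcap (D_i \oplus (1-z)))^\mathcal{I}(x) = \min(1, 1 - C_i^\mathcal{I}(x) + z, D_i^\mathcal{I}(x) + 1 - z)$, which equals $1$ exactly when $C_i^\mathcal{I}(x) \leq z \leq D_i^\mathcal{I}(x)$. Hence $T^\mathcal{I}(x) \geq 1$ forces, for every GCI $C_i \sqsubseteq D_i$, the existence of some $z \in Z'$ between $C_i^\mathcal{I}(x)$ and $D_i^\mathcal{I}(x)$; in particular, $C_i^\mathcal{I}(x) \leq D_i^\mathcal{I}(x)$, so $\mathcal{I} \models \mathcal{T}$ and $\Gamma$ is $\mathcal{T}$-satisfiable.

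For $(\Rightarrow)$: Given $\mathcal{I}$ satisfying $\mathcal{T}$ with $\Gamma$ satisfied at $x_0$, I would construct an interpretation $\mathcal{J}$ in which every concept value (of any subconcept of $\mathcal{T}$ or $\Gamma$) lies in $Z'$; the witness $z$ can then simply be taken to be $C_i^\mathcal{J}(x) \in Z' \cap [C_i^\mathcal{J}(x), D_i^\mathcal{J}(x)]$. Since the constants appearing in $\mathcal{T}$ and $\Gamma$ are rational, $Z$ is a cyclic subgroup of $\mathbb{Q}$ intersected with $[0,1]$, giving an evenly spaced grid with gap $2\epsilon$, and $Z'$ refines this grid to gap $\epsilon$. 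I would define a rounding $\phi \colon [0,1] \to Z'$ which fixes $Z$ and sends each open interval between consecutive $Z$-points to the corresponding midpoint in $Z' \setminus Z$, and set $p^\mathcal{J}(x) := \phi(p^\mathcal{I}(x))$ and $R^\mathcal{J}(x,y) := \phi(R^\mathcal{I}(x,y))$ on the same domain. The key claim, proved by induction on $C$, is that $C^\mathcal{J}(x) = \phi(C^\mathcal{I}(x))$ for every subconcept of $\mathcal{T}$ and $\Gamma$. The propositional and $\ominus c$ cases reduce to checking that $\phi$ commutes with the relevant operation on $[0,1]$: this uses the symmetry of $Z$ about $1/2$ for $\neg$, closure of $Z$ under subtraction for $\ominus c$ with $c \in Z$, and monotonicity for $\sqcap$ and $\sqcup$.

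The main obstacle will be the existential case, since $\phi$ is a step function and so $\sup_y \phi(f(y)) < \phi(\sup_y f(y))$ whenever the supremum is approached but not attained at a $Z$-point. I would handle this by first replacing $\mathcal{I}$ with a witnessed interpretation satisfying the same sequents via a standard witness-model construction for fuzzy $\ALC$, which adds limit individuals so that every existential supremum is actually a maximum; then commutation of $\phi$ with the (attained) supremum is immediate. Once the inductive identity $C^\mathcal{J} = \phi \circ C^\mathcal{I}$ is established, the remaining steps are routine: monotonicity of $\phi$ together with $\mathcal{I} \models \mathcal{T}$ gives $\mathcal{J} \models \mathcal{T}$; the fact that $\phi$ preserves both strict and non-strict inequalities against constants in $Z$ keeps every concept assertion of $\Gamma$ satisfied at $x_0$; and, since $C_i^\mathcal{J}(x), D_i^\mathcal{J}(x) \in Z'$ with $C_i^\mathcal{J}(x) \leq D_i^\mathcal{J}(x)$, we obtain $T^\mathcal{J}(x) \geq 1$ by choosing $z := C_i^\mathcal{J}(x)$.
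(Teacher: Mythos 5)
Your overall strategy coincides with the paper's: the `if' direction is immediate, and for `only if' the paper performs exactly the quantization you describe, fixing values in $Z$ and sending each gap between consecutive $Z$-points to its midpoint in $Z'\setminus Z$, then arguing by induction on subconcepts of $\Gamma$ and $\mathcal{T}$. The difference is the inductive invariant. The paper does not claim the exact identity $C^{\mathcal J}=\phi\circ C^{\mathcal I}$; it only claims that assertions $C\bowtie z$ with $z\in Z$ (and, for non-strict comparisons, $z\in Z'$ away from the value itself) cannot distinguish the two interpretations. This weaker invariant is all that is needed to conclude that $\Gamma$ remains satisfied at $x_0$ (its constants lie in $Z$) and that a witness $z\in Z'$ exists for each GCI at each individual, and it sidesteps the supremum problem you correctly identify, since threshold statements $\exists R.D>z$ and (with some care) $\exists R.D\geq z$ can be transferred without requiring the supremum to be attained.

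The genuine gap in your version is the appeal to ``a standard witness-model construction for fuzzy $\ALC$.'' There is no such off-the-shelf result you can invoke here: the witnessed model property is known to be logic-dependent (it holds for \L{}ukasiewicz fuzzy $\ALC$ by a theorem of H\'ajek, fails for G\"odel semantics, and has not been established for this Zadeh-plus-$\ominus c$ language under general TBoxes). The naive construction of adding limit individuals to realize a supremum $\sup_y\min(R^{\mathcal I}(x,y),D^{\mathcal I}(y))$ is not innocent: a new $R$-successor of $x$ can lower the value of universal restrictions $\forall R.E$ at $x$ and thereby violate other assertions or GCIs, so the construction would need exactly the kind of careful bookkeeping that the tableau rule $(\exists R)$ performs. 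Worse, the most natural route to witnessing for this logic is to observe that all relevant values can be taken in the finite set $Z'$, so that suprema are maxima --- but that is precisely the content of the lemma you are proving, so invoking it here is circular. To close the gap, either prove the witnessed model property for this logic directly (a nontrivial side lemma), or, more economically, weaken your inductive claim from the exact identity $C^{\mathcal J}=\phi\circ C^{\mathcal I}$ to preservation of the finitely many threshold assertions with constants in $Z$ (resp.\ $Z'$), which is what the argument actually needs and which survives non-attained suprema.
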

	\begin{proofsketch}
		'If' is trivial by the above argumentation. 'Only if' can be shown by transforming an arbitrary interpretation satisfying $\Gamma$ under $\mathcal{T}$ into one where all atomic concepts in all individuals and all roles have values in $Z'$ by either keeping them as is if they already were in $Z'$ or taking the closest value in $Z' \setminus Z$ otherwise. One can then show by induction that concept assertions containing only values of $Z$ on their right hand side and subconcepts of $\Gamma$ and $\mathcal{T}$ on their left hand side cannot distinguish between these interpretations and that $T \geq 1$ is satisfied in every individual.
	\end{proofsketch}
	\begin{definition}
		For ${\triangleleft} \in \{ <, \leq \}$ and ${\triangleright} \in \{ >, \geq \}$, we put
		\begin{equation*}
			\mathrel{\blacktriangleleft_{({\triangleleft}, {\triangleright})}} :=
			\begin{cases}
				<, &\text{if }({\triangleleft}, {\triangleright})=(\leq, \geq)\\
				\leq, &\text{otherwise.}
			\end{cases}
		\end{equation*}
	\end{definition}
	\begin{remark}
		The $\mathrel{\blacktriangleleft_{({\triangleleft}, {\triangleright})}}$ operator is used to obtain a non-strict inequality
		if at least one of the inputs is a strictly greater or smaller
		than. This will be useful when
		determining if a pair of inequalities is solvable or not; explicitly, if we have
		$p \leq c$ and $p \geq d$ we cannot find a valid value for $p$ if
		$c < d$ and in all other cases $p \mathrel{\triangleleft} c$ and $p \mathrel{\triangleright} d$ we cannot
		find a valid value for $p$ if $c \leq d$.
	\end{remark}
	\noindent If $T \geq 1$ is the associated concept
	assertion for the TBox $\mathcal{T}$ and a sequent $\Gamma$, this gives us the tableau
	calculus of Table \ref{tab:tableau}.
	The axioms $(\text{Ax } 1), (\text{Ax } 0)$ and $(\text{Ax } p)$ assert that  the truth value of an atomic concept cannot be larger than~$1$, smaller than~$0$ or in an empty interval. The axiom $(\text{Ax } c)$ asserts that  $c \mathrel{\triangleleft} d$ is satisfiable only if this constraint actually holds for the constants $c,d$.
	The $(\sqcap \mathrel{\triangleright}), (\sqcap \mathrel{\triangleleft})$ and $(\lnot \mathrel{\bowtie})$ rules are the usual fuzzy propositional rules.
	The $(\ominus \mathrel{\triangleleft})$ rule asserts that we can add $c$ to both sides of a concept assertion without affecting satisfiability, as long as the assertion is not already unsatisfiable
	by the fact that $C \ominus c$ would have to be smaller than $0$. The $(\ominus \mathrel{\triangleright})$ rule also asserts that we can add $c$ to both sides of the concept assertion. However, this time we have to be careful not to apply this to concept assertions that are always satisfied, i.e.\  assertions saying that the left hand side should be greater than or equal to~$0$. 
	We have to avoid this case, as otherwise these trivial assertions could be transformed into unsatisfiable ones, e.g. the satisfiable concept assertion $0 \ominus 1 \geq 0$ would be transformed into $0 \geq 1$, which is not satisfiable.
	The rule $(\exists R)$ first takes all universal restrictions
	of some role $R \in \mathsf{N_R}$ and one existential restriction $\exists R . C \mathrel{\triangleright} c$. We then filter out all universal restrictions $\exists R . D_j \mathrel{\triangleleft} d_j$ that can be trivially satisfied by finding a value~$e$ for the role~$R$ such that $e \mathrel{\triangleright} c$ but $e \mathrel{\triangleleft} d_j$. This means we only have universal restrictions $\exists R . D_j \mathrel{\triangleleft} d_j$ left with $d_j \mathrel{\blacktriangleleft_{({\triangleleft_j}, {\triangleright})}} c$. We then take these universal restrictions and the existential restriction, remove their $\exists R .$ part, and add 
	$T \geq 1$, while dropping the remaining context $S$, to see if we can create an individual satisfying these constraints.
	\begin{definition}
		\begin{enumerate}[wide]
			\item The \emph{propositional rules} of the calculus are all rules except $(\exists R)$.
			\item As usual, for each rule we take all its possible instances for concrete  constants, concepts, and sets of concept assertions, and say the the rule is \emph{applicable} to a sequent if the sequent matches the premise of an instance of the rule,  satisfying possible side conditions of the rule.
			\item When we say we \emph{apply} a rule to a sequent, we mean that we take the conclusions of a  rule instance whose premise matches the sequent.
			\item A sequent
			is \emph{saturated} if no propositional rules can be applied to it.
		\end{enumerate}
	\end{definition}
	\noindent The idea of a global caching algorithm to construct a tableau is to cache all the labels of its nodes, and whenever a new node with an already encountered label would be created, we instead create an edge to the already existing node with that label, thus obtaining a directed graph, possibly with cycles. As we will see later, this will ensure the algorithm terminates in exponentially many steps.
	\begin{definition}
		\begin{enumerate}[wide]
			\item A \emph{tableau graph} $G$ consists of
			\begin{itemize}[wide]
				\item a directed graph $(V^G,E^G)$ consisting of a set~$V^G$ of \emph{nodes} and a set~$E^G\subseteq V^G\times V^G$ of \emph{edges};
				\item a \emph{root node} $r^G \in V^G$;
				\item and for each node~$v\in V^G$ a \emph{label} $\mathcal{L}^G(v)$, which is a set of concept assertions or $\bot$.
			\end{itemize}
			\item A node $v\in V^G$ in a tableau graph~$G$ is  an \emph{AND-node} if its label $\mathcal{L}^G(v)$ is saturated, and otherwise an \emph{OR-node}. We denote by $A^G$ the set of AND-nodes and by~$O^G$ the set of OR-nodes.
		\end{enumerate}
	\end{definition}
	\begin{definition}
		Let $G$ be a tableau graph and $v \in V^G$ be a node.
		\begin{enumerate}[wide]
			\item A propositional rule has been \emph{applied} to $v$ in $G$ if the labels of the child nodes of~$v$ are exactly the conclusions of an application of an instance of this rule to the label of~$v$.
			\item The rule $(\exists R)$ has been \emph{applied} to $v$ in $G$ if the labels of the child nodes of~$v$ are  exactly the conclusions of all possible applications of this rule (maybe none) to the label of~$v$.
			\item We call $G$ a \emph{tableau} for a sequent $\Gamma$ and $T \geq 1$ if the root node $r^G$ has the label $\mathcal{L}^G(r) = \Gamma \cup \{T \geq 1\}$, a propositional rule has been applied to all OR-nodes and the rule $(\exists R)$ has been applied to all AND-nodes.
		\end{enumerate} 
	\end{definition}
	\noindent The next goal is to prove soundness and completeness of this tableau calculus and then termination and its complexity bound.
	\begin{definition}
		Let $G$ be a tableau for $\Gamma$ and $T \geq 1$.
		\begin{enumerate}[wide]
			\item A \emph{marking} $G_c := (V^{G_c}, E^{G_c})$ of $G$ is a full subgraph of $(V^G, E^G)$ where
			\begin{itemize}[wide]
				\item $r^G \in V^{G_c}$;
				\item for $v \in V^{G_c}, v \in A^G$, we have $w \in V^{G_c}$ for all $(v,w) \in E^G$; and
				\item for $v \in V^{G_c}, v \in O^G$, we have $w \in V^{G_c}$ for some $(v,w) \in E^G$.
			\end{itemize}
			\item A marking is \emph{consistent} if it does not contain a node with label $\bot$.
			\item We call $G$ \emph{open} if there exists a consistent marking, and \emph{closed} otherwise.
		\end{enumerate}
	\end{definition}
	\noindent As we will see later, a consistent marking of a tableau is a postfixpoint of a functional.
	We now show that an open tableau for $\Gamma$ and $T \geq 1$ can be used to construct an interpretation that satisfies $\mathcal{T}$ and where some individual satisfies $\Gamma$. We construct such an interpretation  by first considering $G_c$-saturation paths:
	\begin{definition}
		Let $G_c$ be a marking of a tableau $G$ and $v_0 \in V^{G_c}$. A $G_c$-\emph{saturation path} from $v_0$ is a finite sequence $v_0, v_1, \ldots, v_n$ such that for all $0 \leq i <n$ we have $(v_i, v_{i+1}) \in E^{G_c}$, $\{v_0, \ldots, v_{n-1}\} \subseteq O^G$, and $v_n \in A^G$.
	\end{definition}
	\begin{remark}
		The above definition implies that a node is an AND-node iff it has a $G_c$-saturation path of length~$0$. Furthermore, all nodes have a $G_c$-saturation path, as we would otherwise need to have some infinite path that only visits OR-nodes. However, since all rules that can be applied to an OR-node decrease the syntactic size of concepts in the label, such a path cannot exist.
	\end{remark}
	\begin{definition}
		We call a sequent $\Gamma$ \emph{clashing} if at least one of
		the following holds:
		\begin{enumerate}[wide]
			\item $p \mathrel{\triangleright} c \in \Gamma$ for some $p \in \mathsf{N_C}, c \mathrel{\overline{\triangleright}} 1$ or $p \mathrel{\triangleleft} c \in \Gamma$ for some $p \in \mathsf{N_C}, c \mathrel{\overline{\triangleleft}} 0$;
			\item $p \mathrel{\triangleleft} c \in \Gamma, p \mathrel{\triangleright} d\in \Gamma$ for some $p \in \mathsf{N_C}$, $c \mathrel{\blacktriangleleft_{({\triangleleft}, {\triangleright})}} d$; 
			\item $c \mathrel{\triangleleft} d \in \Gamma$ for some $c \mathrel{\overline{\triangleleft}^\circ} d$.
		\end{enumerate}
	\end{definition}
	We now  prove completeness of the tableau calculus by extracting an interpretation from a consistent marking:
	\begin{theorem}\label{tableau:completeness}
		Let $\Gamma$ be a sequent and $\mathcal{T}$ be a TBox with associated concept assertion $T \geq 1$. If there exists an open tableau $G$ for $\Gamma$ and $T \geq 1$, then $\Gamma$ is $\mathcal T$-satisfiable.
	\end{theorem}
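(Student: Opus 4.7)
The plan is to extract an interpretation $\mathcal{I}$ from a consistent marking $G_c$ of $G$, with universe $\Delta^\mathcal{I} = A^G \cap V^{G_c}$ consisting of the AND-nodes in the marking. Since every $v \in V^{G_c}$ admits a $G_c$-saturation path to some AND-node, I would fix one such path per node, inducing a map $\sigma \colon V^{G_c} \to A^G \cap V^{G_c}$ that is the identity on AND-nodes. For each AND-node $v$ and each atomic concept $p$, consistency of the marking rules out $(\text{Ax } 0)$, $(\text{Ax } 1)$, and $(\text{Ax } p)$, so the constraints $p \mathrel{\bowtie} c \in \mathcal{L}(v)$ are jointly solvable; I pick $p^\mathcal{I}(v)$ to be any such solution (defaulting to $0$ if unconstrained). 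For each AND-node $v$ and each child $w$ produced by an $(\exists R)$-application triggered by $\exists R . C \mathrel{\triangleright} c \in \mathcal{L}(v)$, I set $R^\mathcal{I}(v, w)$ to a value simultaneously $\mathrel{\triangleright} c$ and $\mathrel{\triangleleft_j} d_j$ for every universal $\exists R . D_j \mathrel{\triangleleft_j} d_j \in \mathcal{L}(v)$ with $d_j$ \emph{not} $\mathrel{\blacktriangleleft_{(\triangleleft_j, \triangleright)}} c$; such a value exists exactly because the negation of the $\blacktriangleleft$ condition is what permits a separating value in the unit interval. All remaining role values default to $0$.

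I then intend to prove a truth lemma by induction on concept size: for every $u \in V^{G_c}$ and every $C \mathrel{\bowtie} c \in \mathcal{L}(u)$, $C^\mathcal{I}(\sigma(u)) \mathrel{\bowtie} c$. The propositional cases unwind along saturation paths: an OR-node delegates to its chosen child in $G_c$ on which the rule's conclusion — a strictly simpler concept assertion — holds by induction, and the semantics of $\neg, \ominus, \sqcap$ exactly mirror the corresponding rules. For the existential case $\exists R . C \mathrel{\triangleright} c \in \mathcal{L}(v)$ at an AND-node $v$, the $(\exists R)$-child $w$ carries $C \mathrel{\triangleright} c$ in its label, giving $C^\mathcal{I}(w) \mathrel{\triangleright} c$ by induction and hence $(\exists R . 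C)^\mathcal{I}(v) \mathrel{\triangleright} c$ together with $R^\mathcal{I}(v, w) \mathrel{\triangleright} c$. For the universal $\exists R . D \mathrel{\triangleleft} d$ at $v$, each successor $w'$ of $v$ arises from some existential trigger $\exists R . C' \mathrel{\triangleright'} c'$, and the filtering condition of the $(\exists R)$ rule guarantees that either $d \mathrel{\blacktriangleleft_{(\triangleleft, \triangleright')}} c'$ (whence $D \mathrel{\triangleleft} d \in \mathcal{L}(w')$ and induction gives $D^\mathcal{I}(w') \mathrel{\triangleleft} d$) or $R^\mathcal{I}(v, w') \mathrel{\triangleleft} d$ by construction; non-successors have $R^\mathcal{I} = 0$ and contribute trivially. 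Since $T \geq 1 \in \mathcal{L}(r^G)$ and every $(\exists R)$-application propagates $T \geq 1$ into each child, every individual of $\mathcal{I}$ satisfies $T \geq 1$, so Lemma \ref{lemma:interpretationrestriction} yields $\mathcal{T}$-satisfiability of $\Gamma$, witnessed by $\sigma(r^G)$.

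The main obstacle will be the $(\exists R)$ case of the truth lemma: the constructed role value at a triggering child must simultaneously respect the strict-versus-non-strict distinction of both the existential trigger and every retained-but-filtered universal, and the $\sup$-based semantics of $\exists R . D \mathrel{\triangleleft} d$ with strict $\mathrel{\triangleleft}$ must be honoured even though $D^\mathcal{I}$ may approach $d$ arbitrarily closely. The $\blacktriangleleft$ operator is defined precisely so that a separating role value always exists, but the full case analysis over $\mathrel{\triangleleft}, \mathrel{\triangleright} \in \{<, \leq, >, \geq\}$ will need care. The propositional induction is essentially routine, with the main subtlety being the side condition of $(\ominus \mathrel{\triangleright})$ that guards against trivially satisfied assertions.
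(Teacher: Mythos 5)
Your overall strategy coincides with the paper's: take the AND-nodes of a consistent marking as the domain, connect OR-nodes to AND-nodes via saturation paths, read off atomic and role values from the labels, prove a truth lemma by induction, and finish via Lemma~\ref{lemma:interpretationrestriction}. The propositional cases and the choice of atomic values are fine (assertions $p \mathrel{\bowtie} c$ are preserved unchanged along saturation paths, so reading them off the AND-node's own label is equivalent to the paper's union $Y(x)$ of labels over all incoming paths).

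The genuine gap is in the definition of $R^{\mathcal{I}}$. You assign a role value per $(\exists R)$-child $w$, i.e.\ per existential trigger (and note that $w$ itself is in general an OR-node, so you must mean $R^{\mathcal{I}}(v,\sigma(w))$). But the individuals are AND-nodes, and because of global caching two distinct triggers $\exists R.C_1 \mathrel{\triangleright_1} c_1$ and $\exists R.C_2 \mathrel{\triangleright_2} c_2$ in $\mathcal{L}(v)$ can have children $w_1 \neq w_2$ whose saturation paths end in the \emph{same} AND-node $y$. Then $R^{\mathcal{I}}(v,y)$ is assigned twice, with values chosen against different sets of filtered-out universals, and your universal case of the truth lemma (``or $R^{\mathcal{I}}(v,w') \mathrel{\triangleleft} d$ by construction'') tacitly assumes the value was chosen with respect to the one trigger under consideration. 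The paper avoids this by defining, for each pair $(x,y)$ of individuals, the set $\operatorname{EX}_{R}(x,y)$ of \emph{all} existential triggers whose associated conclusion reaches $y$ along a saturation path, and choosing a single value $e$ that is $\mathrel{\triangleright} d$ for all of them and $\mathrel{\triangleleft} d_j$ for every universal discharged by none of them; the existence of such a joint $e$ is exactly where the $\blacktriangleleft$ side condition is invoked, once per pair. Your construction can be repaired the same way (equivalently, take the maximum of your per-trigger values and redo the universal case, using that a universal kept by some trigger places $D_j \mathrel{\triangleleft_j} d_j$ in that child's label and hence in $Y(y)$), but as written the interpretation is not well defined. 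A second, smaller consequence of fixing only one saturation path per node is that an AND-node of the marking reached only by non-chosen paths is not covered by your truth lemma, which you need when arguing that $T \geq 1$ holds at \emph{every} individual; the paper's quantification over all saturation paths (the sets $\operatorname{SP}(x)$ and $Y(x)$) closes this as well.
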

	\begin{proofsketch}
		We directly construct an interpretation $\mathcal{I}$ from a consistent marking $G_c$ of the open tableau $G$ by contracting the marking along $G_c$-saturation paths:
		\begin{enumerate}[wide]
			\item Put $\Delta^\mathcal{I} := A^G \cap V^{G_c}$.
			\item For every $x \in \Delta^\mathcal{I}$, we put the value of atomic concepts as any value satisfying all the concept assertions in its corresponding label. We can do this since the label of an $x \in V^{G_c}$ can never be clashing.
			\item We take for every $x,y \in \Delta^\mathcal{I}$ and role name $R \in \mathsf{N_R}$ all the nodes along $G_c$-saturation paths starting at a child node of $x$ and ending with~$y$ and by investigating their labels determine which node corresponds to which existential restriction in $x$. We then find a value for $R^\mathcal{I} (x,y)$  that  is large enough to satisfy all these existential restrictions but small enough such that all universal restrictions $\exists R . D \mathrel{\triangleleft} d$ are either satisfied by the value of the transition or $D \mathrel{\triangleleft} d$ is part of the label of a child node of $x$ that has a $G_c$-saturation path ending with~$y$.
		\end{enumerate}
		By induction on~$C$, we have $C^\mathcal{I}(x) \mathrel{\bowtie} c$ for every $x \in \Delta^\mathcal{I}$ and every $C \mathrel{\bowtie} c \in \mathcal{L}(y)$ where $y \in V^{G_c}$ has a $G_c$-saturation path ending with $x$.
		Investigating maximal $G_c$-saturation paths and the tableau rules, we notice that $T \geq 1$ must therefore always be satisfied in every $x \in \Delta^\mathcal{I}$ and that $\Gamma$ is satisfied in some $\tau \in \Delta^\mathcal{I}$ where there is a $G_c$-saturation path starting with the root node $r^{G_c}$ and ending at $\tau$.
	\end{proofsketch}
	\noindent Next, we prove soundness:
	\begin{theorem}\label{tableau:soundness}
		The above tableau calculus is sound. That is, if\/~$\Gamma$ is $\mathcal{T}$-satisfiable, then every tableau for~$\Gamma$ and $T \geq 1$ has a consistent marking, where $T \geq 1$ is the concept assertion  associated to $\mathcal{T}$ and $\Gamma$.
	\end{theorem}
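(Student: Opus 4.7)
The plan is to start from a satisfying interpretation given by Lemma~\ref{lemma:interpretationrestriction}: fix an interpretation $\mathcal{I}$ that satisfies~$\mathcal{T}$, contains an individual $x_0 \in \Delta^{\mathcal{I}}$ at which all of $\Gamma$ holds, and in which \emph{every} $x \in \Delta^{\mathcal{I}}$ satisfies $T \geq 1$. The key benefit of the sharpened form of the lemma is that all truth values lie in the finite set $Z'$, so every supremum appearing in the semantics of $\exists R$ is in fact attained; this is what allows concrete witnesses to be picked below. Given an arbitrary tableau $G$ for $\Gamma$ and $T \geq 1$, I would then define the marking purely semantically: let $V^{G_c}$ consist of all $v \in V^G$ whose label is satisfied at some individual of $\mathcal{I}$. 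It then suffices to verify (i)~$r^G \in V^{G_c}$, (ii)~every OR-node in $V^{G_c}$ has a child in $V^{G_c}$, (iii)~every AND-node in $V^{G_c}$ has all its children in $V^{G_c}$, and (iv)~no node in $V^{G_c}$ is labelled $\bot$.

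Condition (i) is witnessed by $x_0$. For (iv), I would observe that the side conditions of $(\text{Ax } 1)$, $(\text{Ax } 0)$, $(\text{Ax } c)$ and $(\text{Ax } p)$ characterise precisely those labels that are unsatisfiable in any interpretation (an atomic concept forced outside $[0,1]$, constants in an impossible relation, or two bounds on the same atomic concept whose combined interval is empty). Hence such nodes cannot lie in $V^{G_c}$ in the first place, and their $\bot$-children never need to be marked.

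For (ii), I would traverse the remaining propositional rules. The rules $(\sqcap \triangleright)$, $(\lnot \bowtie)$, $(\ominus \triangleleft)$ and $(\ominus \triangleright)$ each produce a single conclusion equivalent to the premise at any $x$: for $(\ominus \triangleleft)$, the equality $\max(C(x)-c, 0) \triangleleft d$ decomposes into $C(x) - c \triangleleft d$ together with $0 \triangleleft d$ (equivalently $d \triangleleft^\circ 0$); for $(\ominus \triangleright)$, the side condition $d \overline{\triangleright} 0$ rules out the trivial case, so that $\max(C(x)-c, 0) \triangleright d$ collapses to $C(x) \triangleright d + c$. The rule $(\sqcap \triangleleft)$ branches, but $\min(C(x), D(x)) \triangleleft c$ forces at least one of $C(x) \triangleleft c$ or $D(x) \triangleleft c$ at the same $x$, so the corresponding child can be added to $V^{G_c}$.

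The main work lies in (iii), the soundness of $(\exists R)$. Given $v \in V^{G_c}$ witnessed by $x$, each child corresponds to one existential $\exists R.C \triangleright c \in \mathcal{L}(v)$ together with the universals $\exists R.D_j \triangleleft_j d_j \in \mathcal{L}(v)$. Since $\mathcal{I}$ is finite-valued, I can pick $y \in \Delta^{\mathcal{I}}$ attaining the supremum for the existential, so that $R^{\mathcal{I}}(x,y) \triangleright c$ and $C^{\mathcal{I}}(y) \triangleright c$. Then $C \triangleright c$ holds at $y$ by choice and $T \geq 1$ holds at $y$ globally. For each retained universal, the bound $\min(R^{\mathcal{I}}(x,y), D_j^{\mathcal{I}}(y)) \triangleleft_j d_j$ together with the inequality $d_j \blacktriangleleft_{(\triangleleft_j, \triangleright)} c$ must be shown to force $D_j^{\mathcal{I}}(y) \triangleleft_j d_j$. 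This is the main obstacle and is carried out by case analysis on $(\triangleleft_j, \triangleright) \in \{<,\leq\}\times\{>,\geq\}$. The delicate combination is $(\leq, \geq)$, where both comparisons are non-strict and $\blacktriangleleft$ becomes the strict $<$: only $d_j < c$ lifts $R^{\mathcal{I}}(x,y) \geq c$ to the strict $R^{\mathcal{I}}(x,y) > d_j$ that is needed to push the non-strict $\min \leq d_j$ onto the $D_j$-side rather than the $R$-side; the other three cases are analogous but easier because strictness of at least one of $\triangleleft_j$ or $\triangleright$ already gives the needed slack.
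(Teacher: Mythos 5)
Your proof is correct and follows essentially the same route as the paper's: the marking is taken to consist of the nodes with satisfiable labels, and one checks rule by rule that every OR-node keeps some satisfiable child and every AND-node keeps all of them (this is Lemma~\ref{lem:sound} in the appendix). Your refinement of working inside a single fixed finite-valued interpretation obtained from Lemma~\ref{lemma:interpretationrestriction} is a useful sharpening rather than a different method, since it guarantees both that the supremum in the $(\exists R)$ case is attained and that $T \geq 1$ transfers to the chosen successor.
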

	\begin{proofsketch}
		Let~$G$ be a tableau for~$\Gamma$ and $T \geq 1$. The label of the
		root~$r^G$ can never be~$\bot$, and by checking the rules one by one,
		we see that they create some satisfiable conclusion or, in the case
		of the rule $(\exists R)$, only satisfiable conclusions from a
		satisfiable premise.  This shows that for all nodes with satisfiable
		label, there is at least one child node for an OR-node that has a
		satisfiable label and all child nodes of AND-nodes have 
		satisfiable labels. This means that if we start at the root node
		$r^G$ we always have at least one child with a satisfiable label or
		all children have a satisfiable label for AND-nodes, as is required
		to construct~$G_c$.
	\end{proofsketch}
	\begin{algorithm*}[!ht]
		\caption{checking satisfiability in non-expansive fuzzy $\ALC$}
		\label{alg:sat}
		\KwIn{a sequent $\Gamma$ and an associated concept assertion $T \geq 1$ for a TBox $\mathcal{T}$ and $\Gamma$}
		\KwOut{ true if all concept assertions of $\Gamma$ are satisfiable under the TBox $\mathcal{T}$, false otherwise}
		construct a tableau $G$ for $\Gamma$ and $T \geq 1$;\\
		$\operatorname{Unsat} := \emptyset$, $\operatorname{Queue} := \emptyset$;\\
		\If{$\exists v_\bot \in V^G : \mathcal{L}^G(v_\bot)=\{\bot\}$} {
			$\operatorname{Unsat} := \{v_\bot\}$, $\operatorname{Queue} := \{v_\bot\}$;\\
			\While{$\operatorname{Queue} \neq \emptyset$} {
				remove some $w$ from $\operatorname{Queue}$;\\
				\ForAll{$p \in V^G : (p,w) \in E^G$} {
					\If{($p \notin \operatorname{Unsat}$ and $p \in A^G$ or $p\in O^G, \forall (p,q) \in E^G: q \in \operatorname{Unsat}$)} {
						add $p$ to $\operatorname{Unsat}$ and $\operatorname{Queue}$;\\
					}
				}
			}
		}
		\eIf{$r^G \in \operatorname{Unsat}$} {
			\Return{false};\\
		} {
			\Return{true};\\
		}
	\end{algorithm*}
	\noindent The procedure and proof for the $\EXP$ bound and termination is inspired by \cite[Section~5]{Gore2013}. It is a construction of the least fixpoint of a functional calculating unsatisfiable nodes. 
	More specifially, we start with the node with label $\bot$ as the only unsatisfiable node and then apply the functional to the set of unsatisfiable nodes until we reach a fixpoint. 
	A node is then in this least fixpoint iff its label is $\mathcal{T}$-unsatisfiable.
	This entails the following tight complexity bound:
	\begin{theorem}\label{theorem:exptime}
		Algorithm \ref{alg:sat} is an $\EXP$ decision procedure for checking satisfiability of the concept assertions in $\Gamma$ with regards to a TBox $\mathcal{T}$.
	\end{theorem}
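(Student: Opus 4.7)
The plan is to show two things: (i) correctness, i.e.\ the algorithm returns \textbf{true} iff $\Gamma$ is $\mathcal{T}$-satisfiable; and (ii) termination within time single-exponential in the input size $n = |\Gamma|+|\mathcal{T}|$.

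For (i), I would first observe that the while-loop implements standard backward propagation of unsatisfiability: at termination, $\operatorname{Unsat}$ equals the least fixpoint $U^\ast = \mu F$ of the monotone functional
\begin{equation*}
F(U) = \{v_\bot\} \cup \{v \in A^G : \exists (v,w) \in E^G.\; w \in U\} \cup \{v \in O^G : \forall (v,w) \in E^G.\; w \in U\}.
\end{equation*}
This is verified by the usual queue-based invariant: a node enters $\operatorname{Unsat}$ precisely at the first iteration when the fixpoint condition for it becomes true, and this is triggered by the enqueuing of its (now-unsatisfiable) successors. I would then prove the central equivalence: $G$ has a consistent marking iff $r^G \notin U^\ast$. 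The direction from right to left follows by showing that $V^G \setminus U^\ast$ itself forms a consistent marking, since the complement form of the fixpoint equation guarantees that every AND-node in the complement has all successors in the complement, and every OR-node in the complement has at least one. The direction from left to right follows by induction on the stage at which a node enters $U^\ast$, showing no consistent marking can contain a node in $U^\ast$. Combined with Theorems~\ref{tableau:soundness} and~\ref{tableau:completeness} (together with Lemma~\ref{lemma:interpretationrestriction}), this yields the correctness claim.

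For (ii), the crux is a single-exponential bound on $|V^G|$. First, the constants appearing in any label lie in the finite set $Z'$ of Definition~\ref{def:associatedconceptassertion}: the only constant-producing rules $(\ominus\,\triangleleft)$ and $(\ominus\,\triangleright)$ simply add a constant that is syntactically present in the concept (hence in the additive group $Z$) to an existing threshold, keeping the threshold in $Z$. A standard analysis of the least common multiple of denominators gives $|Z'| \leq 2^{O(\mathrm{poly}(n))}$. The concepts appearing in labels are subconcepts of $\Gamma$, $\mathcal{T}$ and the associated concept $T$, again of single-exponential count. A reachability analysis following \cite[Section~5]{Gore2013}, exploiting that propositional rules strictly decompose concepts and $(\exists R)$ produces children of a restricted shape (determined by one chosen existential and the filtered universals of the parent), then bounds the number of reachable labels---and hence $|V^G|$---by a single exponential in $n$. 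Given this, line~1 constructs $G$ via breadth-first expansion with label hashing in time polynomial in $|V^G|$; the fixpoint loop adds each node to $\operatorname{Queue}$ at most once with $O(|V^G|)$ work per removal; overall runtime is single-exponential in $n$.

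The main obstacle is this last label-counting step: treating labels as arbitrary subsets of the (already exponential) closure of concept assertions yields only a doubly-exponential bound on $|V^G|$, and squeezing the bound down to a single exponential requires carefully exploiting the restricted shape of rule-generated labels rather than enumerating subsets.
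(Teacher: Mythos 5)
Your overall decomposition---correctness via the equivalence between $r^G \notin \operatorname{Unsat}$ and the existence of a consistent marking (combined with Theorems~\ref{tableau:soundness} and~\ref{tableau:completeness} and Lemma~\ref{lemma:interpretationrestriction}), plus a single-exponential bound on the number of distinct labels---is exactly the paper's, and your fixpoint/complementation argument for the correctness half matches the paper's step~3 essentially verbatim.

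The problem is that the step you yourself flag as ``the main obstacle'' is precisely the quantitative content of the theorem, and you do not supply it. The associated concept $T$ contains, for each GCI, a disjunction $\bigsqcup_{z\in Z'}(\lnot C\oplus z)\sqcap(D\oplus(1-z))$ over the set $Z'$, which has $2^{O(n^2)}$ elements; so $T$ is itself of exponential size, the set of subconcept assertions that can occur in labels is already exponentially large, and arbitrary subsets of that closure would give a doubly exponential node count, as you note. Asserting that a reachability analysis ``exploiting that propositional rules strictly decompose concepts'' closes this gap is not an argument. The paper closes it in Lemma~\ref{lemma:labelsEstimate} with two concrete observations: (a)~each occurrence of a subconcept in a label is paired with at most one comparison operator and one threshold, because the propositional rules simplify the concept whenever they change the right-hand side and at most one propositional rule is applicable to a given assertion, so a label is determined by a set of subconcept occurrences rather than by free (concept, comparator, constant) triples; and (b)~the only fragments of the big disjunction that can ever appear in a label are, with respect to a fixed ordering of $Z'$, either a single disjunct $U_z=\{z\}$ or a suffix $U^c=\{z\in Z' : z\ge c\}$---this follows because $(\sqcap\mathrel{\triangleleft})$, applied through the encoding of $\sqcup$, peels off one disjunct at a time---giving only $2^{O(n^2)}$ choices per GCI instead of $2^{\lvert Z'\rvert}$. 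Without observation~(b) in particular the count does not come down to a single exponential, so as it stands your proposal establishes correctness but not the $\EXP$ bound.
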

	\begin{proof}
		Let $\Gamma$ be a sequent, $T \geq 1$ be the  concept assertion associated to $\mathcal{T}$ and $\Gamma$ and $G$ a tableau for $\Gamma$ and $T \geq 1$.
		\begin{enumerate}[wide]
			\item By investigating the rules and the design of the concept assertion associated to $\mathcal{T}$ and $\Gamma$, one can show that there are at most $2^{O(n^3)}$ possible labels, where $n$ is the syntactic size of $\Gamma$ and $\mathcal{T}$. More specifically, there are $2^{O(n)}$ possible labels arising from applying rules to $\Gamma$, and for every GCI $C \sqsubseteq D$ in $\mathcal{T}$, there are $2^{O(n^2)}$ possible labels that can be obtained from the corresponding conjunct $\sqcup_{z \in Z'}(\lnot C \oplus z) \sqcap (D \oplus (1-z)) \geq 1$  in~$T$. Since we have at most $O(n)$ GCIs, we thus have $2^{O(n^3)}$ possible labels that arise from the TBox, and multiplying  with the $2^{O(n)}$ labels of $\Gamma$, we obtain at most $2^{O(n^3)}$ labels.
			\item We now immediately have that Algorithm \ref{alg:sat} terminates after at most $2^{O(n^3)}$ steps, since there are $2^{O(n^3)}$ nodes in $G$.
			\item The condition $r^G \notin \operatorname{Unsat}$ is equivalent to the existence of a consistent marking for $G$, where $\operatorname{Unsat}$ is the set of unsatisfiable nodes as computed in Algorithm \ref{alg:sat}:
			For $v \in A^G$ we have $v \notin \operatorname{Unsat}$ iff $v$ does not have $\bot$ as its label and for all nodes $w \in V^G$ with $(v,w) \in E^G$ we have $w \notin \operatorname{Unsat}$. Similarly for $v \in O^G$ we have $v \notin \operatorname{Unsat}$ iff there exists a $w \in V^G$ with $(v,w) \in E^G$ and $w \notin \operatorname{Unsat}$. Taking all the nodes that are not in $\operatorname{Unsat}$ then by definition is a consistent marking iff $r^G \notin \operatorname{Unsat}$. This also shows that a consistent marking is a postfixpoint of some functional, as Algorithm \ref{alg:sat} calculates the least fixpoint for $\operatorname{Unsat}$.
		\end{enumerate}
	\end{proof}
	\begin{theorem}\label{theorem:hardness}
		Checking a sequent $\Gamma$ for $\mathcal{T}$-satisfiability is $\EXP$ hard.
	\end{theorem}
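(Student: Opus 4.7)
The plan is to reduce from concept satisfiability with respect to TBoxes in classical (two-valued) $\ALC$, which is known to be $\EXP$-hard. Given a classical $\ALC$ concept $C_0$ and a classical TBox $\mathcal{T}_0$, I construct in polynomial time a non-expansive fuzzy TBox
\[
\mathcal{T}' := \mathcal{T}_0 \cup \{1 \sqsubseteq D \sqcup \lnot D \mid D \text{ is a subconcept of } C_0 \text{ or of some GCI in } \mathcal{T}_0\},
\]
where $1$ denotes the constant concept of value~$1$; the question is then whether the sequent $\{C_0 \geq 1\}$ is $\mathcal{T}'$-satisfiable. Each added GCI forces the corresponding subconcept $D$ to take only values in $\{0,1\}$ at every individual of any fuzzy model of $\mathcal{T}'$, since $(D \sqcup \lnot D)^\mathcal{I}(y) = \max(D^\mathcal{I}(y), 1 - D^\mathcal{I}(y)) \geq 1$ iff $D^\mathcal{I}(y) \in \{0,1\}$.

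The ``if'' direction of correctness is immediate: a classical model of $\mathcal{T}_0$ satisfying $C_0$ at some $x_0$ yields a fuzzy model in which all atomic concepts and roles take values in $\{0, 1\}$, so every concept evaluates to $0$ or $1$, the new GCIs are trivially satisfied, and $C_0^\mathcal{I}(x_0) = 1$. For the ``only if'' direction, given a fuzzy model $\mathcal{I}$ of $\mathcal{T}'$ with $C_0^\mathcal{I}(x_0) \geq 1$, I define a classical interpretation $\mathcal{J}$ over the same domain by setting $p^\mathcal{J}(y)$ to hold iff $p^\mathcal{I}(y) = 1$ and $R^\mathcal{J}(x, y)$ to hold iff $R^\mathcal{I}(x, y) > 0$. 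An induction on the structure of classical subconcepts $C$ of $\mathcal{T}_0, C_0$ shows $C^\mathcal{J}(y)$ iff $C^\mathcal{I}(y) = 1$, from which $\mathcal{J} \models \mathcal{T}_0$ classically and $C_0^\mathcal{J}(x_0) = \top$ both follow.

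The hard case of the induction is the modal step $\exists R. C$: Zadeh semantics gives $(\exists R.C)^\mathcal{I}(x) = \sup_y \min(R^\mathcal{I}(x,y), C^\mathcal{I}(y))$, and this supremum need not be attained, nor does it obviously match the classical existential over the thresholded role. The Boolean-forcing GCI for $\exists R.C$ resolves both issues: the fuzzy value is forced into $\{0, 1\}$, and if it equals $1$ then since $C^\mathcal{I}$ is already Boolean by inductive hypothesis, the set $\{y : C^\mathcal{I}(y) = 1\}$ must contain some element $y$ with $R^\mathcal{I}(x, y) > 0$ (as the supremum over this set equals $1 > 0$), giving the required classical witness for $\exists R.C$. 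The reduction is polynomial since $\mathcal{T}'$ adds only one GCI per subconcept, each of polynomial size. The main step to check carefully is this compatibility of the chosen role threshold with the sup-semantics of $\exists R. C$; all other inductive cases reduce to routine Boolean bookkeeping.
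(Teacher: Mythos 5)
Your reduction is correct, but it takes a genuinely different route from the paper's. The paper also reduces from classical $\ALC$ satisfiability under a TBox, but it leaves the TBox completely unchanged ($\mathcal{T}' := \mathcal{T}_0$) and instead defuzzifies an arbitrary fuzzy model by thresholding both atomic concepts and roles at $0.5$, proving by induction that $C^{\mathcal{I}'}(x) > 0.5$ iff $x \in C^{\mathcal{I}}$; the existential case there goes through because a supremum strictly exceeding $0.5$ must be witnessed by some element strictly exceeding $0.5$. Your approach instead pays a (polynomial, in fact quadratic) price in TBox size to force every subconcept to be Boolean via $1 \sqsubseteq D \sqcup \lnot D$, and then thresholds roles at $>0$ rather than $>0.5$. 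What this buys you is that the induction invariant $C^{\mathcal{J}}(y) \Leftrightarrow C^{\mathcal{I}}(y) = 1$ is an exact biconditional with no boundary subtleties: in the paper's formulation the negation case silently identifies $D^{\mathcal{I}'}(x) \leq 0.5$ with $D^{\mathcal{I}'}(x) < 0.5$, which requires an extra word when the value is exactly $0.5$, whereas your Boolean forcing eliminates intermediate values altogether. Your treatment of the modal step is the one place that genuinely needs care, and you handle it correctly: with $C^{\mathcal{I}}$ Boolean, $(\exists R.C)^{\mathcal{I}}(x)$ reduces to $\sup\{R^{\mathcal{I}}(x,y) : C^{\mathcal{I}}(y)=1\}$, and the forcing GCI for $\exists R.C$ pushes this supremum to $\{0,1\}$, so value $1$ yields a witness with $R^{\mathcal{I}}(x,y) > 0$ even though the supremum need not be attained and roles themselves cannot be Boolean-forced. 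Both reductions are sound and polynomial; the paper's is shorter, yours is more self-contained in that it does not rely on the Zadeh-specific $0.5$-thresholding phenomenon.
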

	\begin{proof}
		We reduce $\ALC$ satisfiability under a TBox, which is known to be $\EXP$ hard \cite{schild1994terminological}, to $\mathcal{T}$-satisfiability in non-expansive fuzzy $\ALC$: Let $\Gamma$ be a set of concepts and $\mathcal{T}$ be a TBox for $\ALC$. Let $\Gamma' := \{C \geq 1 \mid C \in \Gamma\}$ and $\mathcal{T}' := \mathcal{T}$. If $\Gamma$ is satisfiable under~$\mathcal{T}$ in $\ALC$ by some interpretation $\mathcal{I}$, we obtain an interpretation $\mathcal{I}'$ for non-expansive fuzzy $\ALC$ by putting $\Delta^{\mathcal{I}'} := \Delta^\mathcal{I}$ and 
		\begin{equation*}
			p^{\mathcal{I}'}(x) :=
			\begin{cases}
				1 &\text{if } x \in p^\mathcal{I}\\
				0 &\text{otherwise}
			\end{cases}
		\end{equation*}
		\begin{equation*}
			R^{\mathcal{I}'}(x,y) :=
			\begin{cases}
				1 &\text{if } (x,y) \in R^\mathcal{I}\\
				0 &\text{otherwise}.
			\end{cases}
		\end{equation*}
		We then obviously have $\Gamma'$ is satisfiable under~$\mathcal{T}'$, as the evaluation of concepts is the same in this case as in non-fuzzy $\ALC$. On the other hand, if $\Gamma'$ is satisfiable under $\mathcal{T}'$ by some interpretation $\mathcal{I}'$ in non-expansive fuzzy $\ALC$, then we obtain an interpretation $\mathcal{I}$ for $\ALC$ by $\Delta^\mathcal{I} := \Delta^{\mathcal{I}'}$, $x \in p^\mathcal{I}$ iff $ p^{\mathcal{I}'}(x)>0.5$ and $(x,y) \in R^\mathcal{I}$ iff $ R^{\mathcal{I}'}(x,y)>0.5$.
		One can then prove $C^{\mathcal{I}'}(x) > 0.5$ iff $x \in C^\mathcal{I}$ for all concepts $C$ of regular $\ALC$ by induction.
		This then implies that $\Gamma$ is satisfiable under $\mathcal{T}$.
	\end{proof}
	\begin{remark}
		We can modify Algorithm \ref{alg:sat} as in \cite[Section~6]{Gore2013} to decide and propagate satisfiability or unsatisfiability on the fly when constructing the tableau. After expanding a node, we immediately mark it as unsatisfiable if the node has label $\bot$, if the node is an OR-node and all child nodes are marked as unsatisfiable or if the node is an AND-node and at least one child node is marked as unsatisfiable. On the other hand we mark a node with a non-$\bot$ label as satisfiable if the node is an OR-node and at least one child node is marked as satisfiable or if the node is an AND-node and all child nodes are marked as satisfiable. If a node has been marked in this way, we propagate these results upward, i.e.\ we check these conditions again for all parent nodes and if their status changed we propagate again and so on. Thus we can stop expanding the tableau whenever we can decide if the root node is satisfiable or not instead of constructing the full tableau.
		One can also make propagation an optional step that can be applied instead of expanding the graph as seen in \cite{10.1007/978-3-642-12002-2_9} and \cite{10.1007/978-3-642-14203-1_5}.
	\end{remark}
	\begin{remark}
		For readability, we have so far elided ABoxes from the technical development. As usual, a fuzzy  ABox consists of concept assertion $C(a)\mathrel{\bowtie} c$ and role assertions $R(a,b)\mathrel{\bowtie} c$, with the expected semantics, where $a,b\in\mathsf{N}_i$ for a dedicated name space $\mathsf{N}_i$ of \emph{individuals}, which denote elements of the domain.  
		The calculus is extended to handle ABoxes in a straightforward manner by just initializing the run of the tableau procedure with additional root nodes for the individuals mentioned in the ABox, containing all concepts asserted for the respective individual and connected by edges reflecting the role assertions. The complexity remains unaffected.
	\end{remark}
	
	\section{Conclusion}
	We have introduced the description logic \emph{non-expansive fuzzy
		$\ALC$}, which lies between the \L{}ukasiewicz and Zadeh variants of
	fuzzy $\ALC$, notably featuring constant shift operators. In
	particular in connection with general TBoxes, expressivity is markedly
	increased in comparison to Zadeh fuzzy $\ALC$; nevertheless, we have
	shown that the complexity of reasoning remains $\EXP$, the same as for
	two-valued $\ALC$ in sharp contrast to the undecidability encountered
	in the case of \L{}ukasiewicz fuzzy
	$\ALC$~\cite{BaaderPenaloza11}.
	Future work will partly concern coverage of additional features, in
	particular transitive roles, inverses, and nominals. These have been
	integrated fairly smoothly into the less expressive Zadeh variant in
	earlier work~\cite{StoilosEA07,StoilosEA14}; for non-expansive fuzzy
	$\ALC$, the degree of adaptation to the tableau system required to
	accommodate these features remains to be explored.
	\section*{Acknowledgements}
	
	This work is supported by the Deutsche Forschungsgemeinschaft (DFG, German Research Foundation) -- project number 531706730.
	
	We extend recognition to Philipp Hermes, who developed a preliminary unlabelled tableau calculus for non-expansive fuzzy $\ALC$ without TBoxes within his  bachelor thesis supervised by the second author~\cite{Hermes23}.
	
	\bibliographystyle{named}
	\bibliography{tableaux}

\appendix

	\section{Appendix}
	We collect some of the more technical and lengthy details of the proofs. We start with the following statement from the proof of Lemma \ref{lemma:interpretationrestriction}.
	\begin{lemma}\label{lemma:interpretationrestriction2}
		Let $\mathcal{T}$ be a TBox and $\Gamma$ be a sequent. Let $Z$ be the additive group of $1$ and the constants appearing in $\Gamma$ and $\mathcal{T}$ intersected with the unit interval, $\epsilon := \frac{1}{2}\min (Z \setminus \{0\})$ and $Z' := Z \cup \{z+\epsilon \mid z \in Z \setminus\{1\}\}$. Then $\Gamma$ is satisfiable under $\mathcal{T}$ iff there exists an interpretation $\mathcal{I}$ and individual $x \in \Delta^\mathcal{I}$ where $\Gamma$ is satisfied in $x$ and for every $A \sqsubseteq B \in \mathcal{T}$ and every individual $y \in \Delta^\mathcal{I}$ we have some constant $z \in Z'$ with $((\lnot A \oplus z) \sqcap (B \oplus (1-z)))^\mathcal{I}(y) \geq 1$.
	\end{lemma}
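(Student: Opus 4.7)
The ``if'' direction is immediate: for each GCI $A \sqsubseteq B \in \mathcal{T}$ and each $y \in \Delta^\mathcal{I}$, the witnessing $z \in Z'$ forces both conjuncts $(\lnot A \oplus z)^\mathcal{I}(y)$ and $(B \oplus (1-z))^\mathcal{I}(y)$ to equal $1$, and hence $A^\mathcal{I}(y) \leq z \leq B^\mathcal{I}(y)$; so $\mathcal{I}$ satisfies $\mathcal{T}$, while $\Gamma$ is satisfied at $x$ by assumption.

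For the ``only if'' direction, the plan is a rounding argument. Since $Z$ is the additive subgroup of $\mathbb{Q}$ generated by $1$ and finitely many rationals, $Z$ is cyclic with $Z = 2\epsilon \cdot \mathbb{Z}$, so $Z \cap [0,1]$ is an evenly spaced grid of step $2\epsilon$ that $Z'$ refines by halves. Define the rounding $\rho \colon [0,1] \to Z'$ by $\rho(v) := v$ if $v \in Z'$ and $\rho(v) := z_0 + \epsilon$ whenever $v$ lies in $(z_0, z_1) \setminus Z'$ for consecutive $z_0 < z_1 \in Z$; a direct distance check shows this is indeed the closest value in $Z' \setminus Z$, as announced in the sketch of Lemma~\ref{lemma:interpretationrestriction}. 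The key property of $\rho$ is that it strictly preserves every comparison with an element of $Z$: $v \bowtie z \iff \rho(v) \bowtie z$ for all $z \in Z$ and $\bowtie \in \{<,\leq,>,\geq\}$. Given an interpretation $\mathcal{I}$ witnessing $\mathcal{T}$-satisfiability of $\Gamma$ at $x$, I define $\mathcal{I}'$ on the same domain by applying $\rho$ pointwise to the values of all atomic concepts and all roles.

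The core of the proof is an induction on $C$ showing that for every subconcept $C$ occurring in $\Gamma$ or in $\mathcal{T}$, every $y$, every $z \in Z$, and every $\bowtie$, $C^{\mathcal{I}'}(y) \bowtie z \iff C^\mathcal{I}(y) \bowtie z$. The propositional cases ($\lnot$, $\ominus c$ with $c \in Z$, $\sqcap$) reduce to the induction hypothesis using that $1-z, z+c \in Z$ and that $\rho$ is weakly monotone and commutes with $\min$ across $Z$-intervals. The case $\exists R.D$ is the main obstacle: since $\rho$ is only weakly monotone, an unattained supremum $(\exists R.D)^\mathcal{I}(y) = s \in Z$ would round down by one grid step in $\mathcal{I}'$, breaking the equivalence. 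I circumvent this by a preparatory reduction: we may assume WLOG that every supremum $(\exists R.D)^\mathcal{I}(y)$ for $\exists R.D$ a subconcept of $\Gamma \cup \mathcal{T}$ is attained, obtained by tree-unravelling $\mathcal{I}$ and grafting fresh witness individuals built recursively to realize the required value of $D$; termination follows from the bounded modal depth of the finitely many subconcepts at play. Under this assumption the inductive step for $\exists R.D$ reduces the supremum comparison $\sup \bowtie z$ to an $\exists y'$ or $\forall y'$ statement about $\min(R(y,y'), D(y'))$, each of which transfers across interpretations by the induction hypothesis on $D$ together with the preservation property of $\rho$ applied to $R$.

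From the invariant, since every rational constant in $\Gamma$ lies in $Z$, $\Gamma$ is still satisfied at $x$ in $\mathcal{I}'$. For every GCI $A \sqsubseteq B \in \mathcal{T}$ and every $y$, the inequality $A^\mathcal{I}(y) \leq B^\mathcal{I}(y)$ transfers to $A^{\mathcal{I}'}(y) \leq B^{\mathcal{I}'}(y)$ because the rounded values respect the ordering of $Z$-intervals. Both rounded values lie in $Z'$, so setting $z := A^{\mathcal{I}'}(y) \in Z'$ directly yields $(\lnot A \oplus z)^{\mathcal{I}'}(y) = 1$ and $(B \oplus (1-z))^{\mathcal{I}'}(y) = 1$, whence their conjunction equals $1$, as required.
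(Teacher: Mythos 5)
Your overall strategy---the trivial ``if'' direction, the rounding map sending each non-grid value to the midpoint $z_0+\epsilon$ of its $Z$-interval, the induction showing that comparisons of subconcept values against constants in $Z$ are preserved, and the final transfer of the GCIs using that all values in the rounded model lie in the finite set $Z'$---is essentially the paper's own proof. You are also right to single out the existential case as the delicate point: an unattained supremum sitting exactly on a grid point $z\in Z$ really can drop below $z$ after rounding (all approximating values round down to $z-\epsilon$), and this would break both the invariant and, through it, the preservation of $\Gamma$ and of the GCIs. So the one place where you genuinely depart from the paper is the preparatory ``WLOG all suprema are attained'' reduction, and that is exactly where your proof has a gap.

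As stated, the reduction is a nontrivial claim with an inadequate justification. First, the grafted witnesses must themselves satisfy the TBox, and the TBox applies at \emph{every} individual and contains existential subconcepts; hence attainment has to be enforced at the fresh individuals too, at all depths. ``Termination follows from the bounded modal depth of the finitely many subconcepts'' therefore does not go through---the construction is necessarily an infinite (limit) one, and you would need to argue that it converges to a well-defined interpretation still satisfying $\mathcal{T}$ and $\Gamma$. Second, and more seriously, adding a fresh $R$-successor $y'$ of $y$ with $R(y,y')=s$ and $D^{\mathcal I}(y')\geq s$ can raise the value of \emph{other} subconcepts $\exists R.E$ at $y$ (whenever $\min(s,E^{\mathcal I}(y'))$ exceeds the old supremum), and such changes propagate through nested modalities to $x$ and can violate $\Gamma$ or a GCI elsewhere. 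Your sketch says nothing about how the recursive realization of $D$ at $y'$ avoids this; an arbitrary realization does not. A correct repair needs the new witness to behave like a \emph{limit} of the approximating successors on all relevant subconcepts simultaneously (e.g., via an ultraproduct of the pointed substructures at the $y_n$, or an equivalent compactness argument), so that no other supremum increases. Until the reduction is proved in this stronger form, the inductive step for $\exists R.D$ with non-strict comparisons remains unestablished.
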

	\begin{proof}
		If there exists such an interpretation then clearly $\Gamma$ is
		satisfiable under $\mathcal{T}$ with this interpretation as
		witness. On the other hand let $\mathcal{I}'$ be an interpretation
		that satisfies $\Gamma$ under $\mathcal{T}$. We construct
		$\mathcal{I}$ in the following way: Put
		$\Delta^\mathcal{I} := \Delta^{\mathcal{I}'}$. For every individual
		$x \in \Delta^\mathcal{I}$ put
		$p^\mathcal{I}(x) := p^{\mathcal{I}'}(x)$ if
		$p^{\mathcal{I}'}(x) \in Z$ and otherwise put
		$p^\mathcal{I}(x) := \max\{z\oplus\epsilon \mid z \in Z, z <
		p^{\mathcal{I}'}(x)\}$. For every role $R \in \mathsf{N_R}$ and individuals
		$x,y \in \Delta^\mathcal{I}$ we put
		$R^\mathcal{I}(x,y) := R^{\mathcal{I}'}(x,y)$ if
		$R^{\mathcal{I}'}(x,y) \in Z$ and
		$R^\mathcal{I}(x,y) := \max\{z\oplus\epsilon \mid z \in Z, z <
		R^{\mathcal{I}'}(x,y)\}$ otherwise. We prove by induction on~$C$
		that for every subconcept $C$ of $\Gamma$ and $\mathcal{T}$, every individual
		$x\in \Delta^\mathcal{I}$, and every constant $z \in Z$,
		$C^{\mathcal{I}'}(x) \mathrel{\bowtie} z$ iff $C^\mathcal{I}(x) \mathrel{\bowtie} z$,
		and if $\mathrel{\bowtie} \in \{ \geq, \leq\}$, then this even holds for all
		$z \in Z'$ with $z \neq C^\mathcal{I}(x)$.
		\begin{itemize}[wide]
			\item $C=c$  a constant: Trivial.
			\item $C=p$ an atomic concept: We have by definition that either
			$p^\mathcal{I}(x) = p^{\mathcal{I}'}(x)$, in which case the claim
			is trivial, or
			$p^\mathcal{I}(x) = \max\{z+\epsilon \mid z \in Z, z <
			p^{\mathcal{I}'}(x)\}$. In the latter case we clearly have for all
			$z \in Z$ that $z < p^{\mathcal{I}'}(x)$ iff
			$z < p^\mathcal{I}(x)$. The other inequalities directly follow.
																					\item Induction step: 
			\begin{itemize}[wide]
				\item $C=\lnot D$: We then have $D^{\mathcal{I}'}(x) \mathrel{\bowtie} 1-z$ iff $D^\mathcal{I}(x) \mathrel{\bowtie} 1-z$ by the induction hypothesis and the claim immediately follows.
				\item $C=D \ominus c$: If $(0, <) \neq (z, \mathrel{\bowtie}) \neq (0, \geq)$ and $(1, >) \neq (z, \mathrel{\bowtie}) \neq (1, \leq)$ we then have $D^{\mathcal{I}'}(x) \mathrel{\bowtie} z+c$ iff $D^\mathcal{I}(x) \mathrel{\bowtie} z+c$ by the induction hypothesis and the claim immediately follows. If $(0, <) = (z, \mathrel{\bowtie})$ or $(1, >) = (z, \mathrel{\bowtie})$ then both assertions are never satisfiable and if $(z, \mathrel{\bowtie}) = (0, \geq)$ or $(1, \leq) = (z, \mathrel{\bowtie})$ then the assertions are trivially always satisfied.
				\item $C= D_1 \sqcap D_2$: Using the induction hypothesis we have $D_1^{\mathcal{I}'}(x) \mathrel{\bowtie} z$ iff $D_1^\mathcal{I}(x) \mathrel{\bowtie} z$ and $D_2^{\mathcal{I}'}(x) \mathrel{\bowtie} z$ iff $D_2^\mathcal{I}(x) \mathrel{\bowtie} z$. If $\mathrel{\bowtie} = \mathrel{\triangleright}$ then we have $C^{\mathcal{I}'}(x) \mathrel{\triangleright} z$ iff $D_1^{\mathcal{I}'}(x) \mathrel{\triangleright} z, D_2^{\mathcal{I}'}(x) \mathrel{\triangleright} z$. By the induction hypothesis we then obtain $C^{\mathcal{I}'}(x) \mathrel{\triangleright} z$ iff $D_1^{\mathcal{I}}(x) \mathrel{\triangleright} z, D_2^{\mathcal{I}}(x) \mathrel{\triangleright} z$ which gives us $C^{\mathcal{I}'}(x) \mathrel{\triangleright} z \leftrightarrow C^{\mathcal{I}}(x) \mathrel{\triangleright} z$. On the other hand if $\mathrel{\bowtie} = \mathrel{\triangleleft}$ then we have $C^{\mathcal{I}'}(x) \mathrel{\triangleleft} z$ iff $D_1^{\mathcal{I}'}(x) \mathrel{\triangleleft} z$ or $D_2^{\mathcal{I}'}(x) \mathrel{\triangleleft} z$. Using the induction hypothesis on the corresponding case then immediately proves the claim.
				\item $C = \exists R . D$: If $\mathrel{\bowtie} = \mathrel{\triangleright}$ then $(\exists R . D)^{\mathcal{I}'}(x) \mathrel{\triangleright} z$ iff there exists $y \in \Delta^{\mathcal{I}'}$ with $R^{\mathcal{I}'}(x,y) \mathrel{\triangleright} z$ and $D^{\mathcal{I}'}(y) \mathrel{\triangleright} z$. The first one is by definition true iff $R^{\mathcal{I}}(x,y) \mathrel{\triangleright} z$ and the second one is true iff $D^{\mathcal{I}}(y) \mathrel{\triangleright} z$ by the induction hypothesis. Finally if $\mathrel{\bowtie} = \mathrel{\triangleleft}$ then for all $y \in \Delta^{\mathcal{I}'}$ we have $R^{\mathcal{I}'}(x,y) \mathrel{\triangleleft} z$ or $D^{\mathcal{I}'}(y) \mathrel{\triangleleft} z$ iff $(\exists R . D)^{\mathcal{I}'}(x) \mathrel{\triangleleft} z$. The first one is by defintion equivalent to $R^{\mathcal{I}}(x,y) \mathrel{\triangleleft} z$ and the second one by the induction hypothesis to $D^{\mathcal{I}}(y) \mathrel{\triangleleft} z$ which concludes the case.
			\end{itemize}
		\end{itemize}
		Using this we immediately obtain a constant $z \in Z'$ for every individual $x \in \Delta^\mathcal{I}$ and GCI $A \sqsubseteq B \in \mathcal{T}$ with $((\lnot A \oplus z) \sqcap (B \oplus (1-z)))^\mathcal{I}(x) \geq 1$ by for example taking $z = A^\mathcal{I}(x)$. Now let $x$ be the individual such that for all $C \mathrel{\bowtie} c \in \Gamma$ we have $C^{\mathcal{I}'}(x) \mathrel{\bowtie} c$. We then also have $C^{\mathcal{I}}(x) \mathrel{\bowtie} c$ by the induction and the fact that $c \in Z$.
	\end{proof}
														\noindent We have the following lemmas for the proof of Theorem \ref{tableau:completeness}:
	\begin{lemma}\label{lem:satpath}
		Let $G_c$ be a consistent marking for a tableau $G$ and $v_0, \ldots, v_n$ be a $G_c$-saturation path of $v_0$. Then:
		\begin{enumerate}[wide]
			\item all assertions of the form $p \mathrel{\bowtie} c$, $d \mathrel{\bowtie} c$, $\exists R . C \mathrel{\bowtie} c$ 
			of the label of each $v_i$ are in the label of $v_n$, where $p \in \mathsf{N_C}$.
			\item the label of each $v_i$ does not clash.
			\item the set formed by taking the union of the labels $v_0, v_1, \ldots, v_n$ does not clash.
		\end{enumerate}
	\end{lemma}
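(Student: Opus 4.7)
The plan is to first observe that every non-axiom propositional rule has as its principal formula either a conjunction $C \sqcap D$, a negation $\neg C$, or a $\ominus$-formula $C \ominus c$, and passes the remaining context $S$ unchanged to every conclusion. In particular, none of these rules deletes any assertion of one of the three shapes $p \mathrel{\bowtie} c$ (with $p$ atomic), $d \mathrel{\bowtie} c$ (with $d$ a rational constant acting as a concept), or $\exists R . C \mathrel{\bowtie} c$ that is present in the premise; such assertions are transmitted intact to every conclusion of the rule.

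Next, I would argue that no step $v_i \to v_{i+1}$ (with $i < n$) along the saturation path is obtained by an axiom application. Each of $(\text{Ax } 1), (\text{Ax } 0), (\text{Ax } c), (\text{Ax } p)$ has $\bot$ as its sole conclusion; since $v_i$ is an OR-node in a consistent marking $G_c$, the marking must contain at least one successor of $v_i$, and that successor would then carry the label $\bot$, contradicting consistency of $G_c$. Hence every transition along the path is an application of a non-axiom propositional rule, and part~(1) follows by a straightforward induction on~$i$.

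For parts~(2) and~(3), I would appeal to the fact that the definition of a clashing sequent mentions only assertions of the forms $p \mathrel{\bowtie} c$ and $d \mathrel{\bowtie} c$; no clashing condition involves an existential restriction. The terminal AND-node $v_n$ is saturated, so no axiom applies to it, and hence $\mathcal{L}^G(v_n)$ is non-clashing. By part~(1), every atomic assertion and every constant assertion occurring in any $\mathcal{L}^G(v_i)$ also occurs in $\mathcal{L}^G(v_n)$; therefore neither the individual labels $\mathcal{L}^G(v_i)$ nor their union $\bigcup_{i=0}^{n} \mathcal{L}^G(v_i)$ can satisfy any of the clashing conditions. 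The main (and really only) subtlety is the exclusion of axiom applications along the path, where consistency of $G_c$ is essential; the remainder is just routine case analysis on the rules of Table~\ref{tab:tableau}.
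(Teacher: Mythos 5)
Your proof is correct and follows essentially the same route as the paper's: persistence of atomic, constant, and existential assertions along the path (since only the axioms and $(\exists R)$ touch them, and both are excluded on a saturation path inside a consistent marking), combined with the fact that the terminal AND-node cannot clash. The only cosmetic difference is that you derive part~(2) uniformly with part~(3) from part~(1) plus saturation of $v_n$, whereas the paper argues part~(2) directly by noting that a clash would force an axiom application, and hence a $\bot$-labelled node, before any AND-node is reached; both observations are equivalent.
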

	\begin{proof}
		\begin{enumerate}[wide]
			\item This is trivial, as no rule operates on these assertions apart from the axioms which are disqualified because of it being a consistent marking and the $(\exists R)$ rule, which cannot be in the middle of such a path, as it would be an AND-node.
			\item If we have clashing assertions in some node $v$, there would have to be an axiom applied in $G$ to all successors of the corresponding node of $G$ before any other AND-node 
						can happen. This means that the AND-node $v_n$ would be the conclusion of an axiom, i.e. it would have a $\bot$ label, which violates the consistent marking condition. Contradiction.
			\item We take the union of the labels of $v_0, \ldots, v_n$. If there are clashing assertions, we have the following possibilities: either $p \mathrel{\triangleright} c$ with $p \in \mathsf{N_C}, c \mathrel{\overline{\triangleright}} 1$ or $p \mathrel{\triangleleft} c$ with $p \in \mathsf{N_C}, c \mathrel{\overline{\triangleleft}} 0$ is in the union, but then it would have been part of some $v_i$ which contradicts statement 2. The same is true for $c \mathrel{\triangleleft} d$ with $c \mathrel{\overline{\triangleleft}^\circ} d$, as this assertion would be in $v_n$ by statement 1. and by statement 2. this cannot be the case. Finally if we have the case $p \mathrel{\triangleleft} c, p \mathrel{\triangleright} d$ with $p \in \mathsf{N_C}$, $d \blacktriangleleft_{(\mathrel{\triangleleft}, \mathrel{\triangleright})} c$, we once again use statement 1. to assert that both of these must be in $v_n$ and as such by statement 2. this is a contradiction. 		\end{enumerate}
	\end{proof}
	\begin{definition}
		Let $G$ be a tableau graph and $v \in V^G$ be a node. Let $\exists R . C \mathrel{\triangleright} c \in \mathcal{L}^G(v)$. If the exists rule has been applied to $v$ in $G$ we call the conclusion of the instance of the exists rule, where $\exists R . C \mathrel{\triangleright} c$ is the relevant existential restriction, the \emph{associated conclusion} of $\exists R . C \mathrel{\triangleright} c$ in $v$. We will also write $\operatorname{AC}(\exists R . C \mathrel{\triangleright} c, v)$ for the associated conclusion.
	\end{definition}
	\begin{definition}\label{definition:extractedInterpretation}
		Let $G_c$ be a consistent marking for a tableau $G$. Then the interpretation $\mathcal{I}$ extracted from $G_c$, which we outlined in the proof of Theorem \ref{tableau:completeness}, is defined by the following procedure:
		\begin{enumerate}[wide]
			\item Put $\Delta^\mathcal{I} := A^G \cap V^{G_c}$.
			\item For $x \in \Delta^\mathcal{I}$, take all nodes along a $G_c$-saturation paths ending with $x$ as $\operatorname{SP}(x) := \{v_i \mid 0 \leq i \leq n, (v_0, \ldots, v_n=x) \text{ is a } G_c\text{-saturation path}\}$ and define $Y(x):=\bigcup_{v \in \operatorname{SP}(x)} \mathcal{L}^G(v)$ as the union of all labels of $G_c$-saturation paths ending with $x$. For every atomic concept $p$ we can then find a value $c \in [0,1]$ such that for every $p \mathrel{\bowtie} d \in Y(x)$ we have $c \mathrel{\bowtie} d$ is true. Define $p^\mathcal{I}(x) := c$. We can do this since the concept assertions in $Y(x)$ are not clashing.
			\item Let $x \in \Delta^\mathcal{I}$, $y \in \Delta^\mathcal{I}$ and $R$ be a role name. Define $\operatorname{EX}_{R} (x,y) \coloneqq \bigcup_{w \in \operatorname{SP}(y)} \{\exists R . D \mathrel{\triangleright} d \in \mathcal{L}^G(x) \mid (x,w) \in E^{G_c}, \mathcal{L}^G(w) = \operatorname{AC}(\exists R . D \mathrel{\triangleright} d, x)\}$ as the concept assertions $\exists R . D \mathrel{\triangleright} d$ in $Y(x)$ such that $D \mathrel{\triangleright} d$ is in $Y(y)$ and the associated conclusion in $x$ is the label of a node on a $G_c$-saturation path that ends at $y$.
			Furthermore, let $\operatorname{FA}_{R}(x):=\{\exists R . D \mathrel{\triangleleft} d \in Y(x)\}$ be the set of universal restrictions for $R$ in $x$. Find $e$ big enough such that for all $(\exists R . D \mathrel{\triangleright} d) \in \operatorname{EX}_{R}(x,y)$ we have $e \mathrel{\triangleright} d$, but small enough such that for all $(\exists R . D \mathrel{\triangleleft} d) \in \operatorname{FA}_{R} (x)$ we have $e \mathrel{\triangleleft} d$ or $(D \mathrel{\triangleleft} d) \in Y(y)$. Put $R ^\mathcal{I} (x,y) := e$. 		\end{enumerate}
	\end{definition}
	\begin{remark}
		We can always do step (3) of the construction in Definition \ref{definition:extractedInterpretation}: if we have $\exists R . D \mathrel{\triangleleft} d \in Y(x)$ then we also have $\exists R . D \mathrel{\triangleleft} d \in \mathcal{L}^G(x)$ and when we apply an exists rule to some $\exists R . F \mathrel{\triangleright} f \in \mathcal{L}^G(x)$ such that the conclusion $S$ is the label of some $w$ in $\operatorname{SP}(y)$ then $S \subseteq Y(y)$. If we have $d \blacktriangleleft_{(\mathrel{\triangleleft}, \mathrel{\triangleright})} f$, then $D \mathrel{\triangleleft} d$ is part of the conclusion, so we have $D \mathrel{\triangleleft} d \in Y(y)$.
		If we have $d \mathrel{\overline{\blacktriangleleft_{(\mathrel{\triangleleft}, \mathrel{\triangleright})}}^\circ} f$ for all $(\exists R . F \mathrel{\triangleright} f) \in \operatorname{EX}_{R}(x,y)$, then we can obviously find $e$ such that $e \mathrel{\triangleleft} d$ but for all $(\exists R . F \mathrel{\triangleright} f) \in \operatorname{EX}_{R}(x,y)$ we have $e \mathrel{\triangleright} f$.
		Doing this for all (finitely many) such assertions and taking the smallest such $e$ proves this claim.
	\end{remark}
	\begin{lemma}\label{lem:rulesapp}
		Let $\mathcal{I}$ be the interpretation extracted from a consistent marking $G_c$ for a tableau $G$ and $C \mathrel{\bowtie} c \in Y(x)$ for some $x \in \Delta^\mathcal{I}$. Then:
		\begin{enumerate}[wide]
			\item If there are OR-rules of the tableau applicable to $C \mathrel{\bowtie} c$ then at least one of the conclusions of an applicable rule is also in $Y(x)$.
			\item There is at most one OR-rule applicable to $C \mathrel{\bowtie} c$.
			\item If $C$ is not a constant, an atomic concept or $C = \exists R_l . D$ and $C \mathrel{\bowtie} c$ is not $D \ominus d \mathrel{\triangleright} c$ with $c \mathrel{\triangleright^\circ} 0$ then an OR-rule is always applicable to $C \mathrel{\bowtie} c$.
			\item If there are AND-rules of the tableau applicable to $C \mathrel{\bowtie} c = \exists R . D \mathrel{\triangleright} c$ then all of its conclusions are in $Y(y)$ and $R^\mathcal{I} (x,y) \mathrel{\triangleright} c$ for some $y \in \Delta^\mathcal{I}$.
		\end{enumerate}
	\end{lemma}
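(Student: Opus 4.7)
The plan is to verify each of the four clauses by a direct analysis against the tableau rule table and the construction of the extracted interpretation; Lemma \ref{lem:satpath} carries almost all the weight.

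For (1), fix some $v \in \operatorname{SP}(x)$ with $C \mathrel{\bowtie} c \in \mathcal{L}^G(v)$, and let $v = v_i, v_{i+1}, \ldots, v_n = x$ be the tail of a $G_c$-saturation path passing through $v$. Since the AND-node $v_n = x$ is saturated, an assertion to which a propositional rule applies cannot belong to $\mathcal{L}^G(v_n)$; on the other hand, Lemma \ref{lem:satpath}(2) ensures the labels along the path do not clash, so the applicable rule cannot be one of the axioms. Hence, walking forward along the path, there must be a first index $j \geq i$ at which the rule for $C \mathrel{\bowtie} c$ is taken as the applied instance. The child $v_{j+1}$ picked by the marking then has $\mathcal{L}^G(v_{j+1})$ equal to one of the rule's conclusions, and since $v_{j+1} \in \operatorname{SP}(x)$, that conclusion is contained in $Y(x)$. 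The subtle point here is that $C \mathrel{\bowtie} c$ must actually be resolved before reaching $x$; this uses the fact that propositional rules strictly decrease syntactic size on the principal concept, already exploited in the remark after the definition of saturation path.

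For (2), a direct syntactic inspection of Table \ref{tab:tableau} is enough: each propositional rule is keyed by the top-level symbol of its principal concept together with the orientation of $\mathrel{\bowtie}$ (atomic concept with $\triangleleft$ vs.\ $\triangleright$; constant on the left; negation; $\sqcap$ split by $\triangleright$ or $\triangleleft$; $\ominus$ split by $\triangleright$ or $\triangleleft$). No two rules share such a principal scheme, so the rule applicable to $C \mathrel{\bowtie} c$ is uniquely determined if it exists. For (3), I would walk through the remaining shapes of $C$: if $C = \neg D$ then $(\neg \mathrel{\bowtie})$ always applies; if $C = D_1 \sqcap D_2$ then $(\sqcap \triangleright)$ or $(\sqcap \triangleleft)$ applies according to the orientation of $\mathrel{\bowtie}$; if $C = D \ominus d$ with $\mathrel{\bowtie} = \triangleleft$ the rule $(\ominus \triangleleft)$ applies, while for $\mathrel{\bowtie} = \triangleright$ the rule $(\ominus \triangleright)$ applies whenever its side condition $c \mathrel{\overline{\triangleright}} 0$ holds. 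The only way this side condition can fail given $c \in [0,1]$ is to have $\triangleright = \geq$ and $c = 0$, which yields exactly the trivial assertion $D \ominus d \geq 0$ that the statement explicitly excludes.

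For (4), assume the AND-rule $(\exists R)$ is applicable to $C \mathrel{\bowtie} c = \exists R . D \mathrel{\triangleright} c$ in $Y(x)$. By Lemma \ref{lem:satpath}(1), existential assertions propagate along saturation paths to the terminal AND-node, so $\exists R . D \mathrel{\triangleright} c \in \mathcal{L}^G(x)$. Since $x$ is an AND-node in the tableau $G$, the rule $(\exists R)$ has been applied exhaustively at $x$, in particular producing a child $w$ whose label equals $\operatorname{AC}(\exists R . D \mathrel{\triangleright} c, x)$. By the marking condition for AND-nodes, $w \in V^{G_c}$, and from $w$ one can follow a $G_c$-saturation path to some AND-node $y \in \Delta^\mathcal{I}$, so $w \in \operatorname{SP}(y)$ and thus $\mathcal{L}^G(w) \subseteq Y(y)$. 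Finally, $\exists R . D \mathrel{\triangleright} c$ satisfies both conditions in the definition of $\operatorname{EX}_R(x, y)$, so Definition \ref{definition:extractedInterpretation}(3) forces $R^\mathcal{I}(x, y) \mathrel{\triangleright} c$. The main obstacle, as mentioned, is the bookkeeping in clause (1); clauses (2)--(4) are direct rule-table checks combined with the construction of $\mathcal{I}$.
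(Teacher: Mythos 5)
Your proof is correct and takes essentially the same route as the paper's: clause (1) via the fact that an unresolved propositional assertion would contradict saturation of the terminal AND-node of the $G_c$-saturation path, clauses (2)--(3) by inspection of the rule table, and clause (4) via the child node carrying $\operatorname{AC}(\exists R.D \mathrel{\triangleright} c, x)$ together with the definition of $\operatorname{EX}_R(x,y)$; you are in fact more explicit than the paper on (3) and (4). One small imprecision: in (2) the axioms do overlap syntactically on atomic-concept assertions (e.g.\ $(\text{Ax } 1)$ and $(\text{Ax } p)$ can both be keyed on $p \mathrel{\triangleright} c$), so uniqueness among \emph{all} propositional rules is not purely a matter of disjoint principal schemes --- the paper closes this by observing that in a consistent marking the axioms are never applicable to assertions in $Y(x)$ (Lemma~\ref{lem:satpath}(2)), an observation you already use in clause (1).
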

	\begin{proof}
		\begin{enumerate}[wide]
			\item Because of $C \mathrel{\bowtie} c \in Y(x)$ there exists a $G_c$-saturation path $(v_0, \ldots, v_n=x)$ with $C \mathrel{\bowtie} c \in \mathcal{L}^G(v_0)$. A $G_c$-saturation path either stops at an AND-node without successors, in which case no rule is applicable at all, or an AND-node where the exists rule is being applied, in which case no OR-rule is allowed to be applicable because of the saturation condition of the exists rule. This means that if there is any OR-rule applicable to $C \mathrel{\bowtie} c$ an OR-rule must have been applied to it along the $G_c$-saturation path, so we have some $0 \leq i \leq n$ where one of the conclusions of that rule is in the label $\mathcal{L}^G(v_i) \subseteq Y(x)$.
			\item This is true by the design of the grammar and the rules: Since we are in a consistent marking, the axioms cannot be applicable to any concept assertions in $Y(x)$, and by design the non-axiom OR-rules do not overlap.
			\item This is also trivially true by design of the grammar and the tableau rules.
			\item If the exists rule is applicable to $C \mathrel{\bowtie} c = \exists R . D \mathrel{\triangleright} c$, there has to be some $w \in G_c$ where $(x, w) \in E^{G_c}$, $D \mathrel{\triangleright} c \in \mathcal{L}^G(w)$ and for all $\exists R . F \mathrel{\triangleleft} f \in \operatorname{FA}(x)$ we have $F \mathrel{\triangleleft} f \in \mathcal{L}^G(w)$ or $f \mathrel{\triangleleft} c$. Since any node is part of at least one $G_c$-saturation path, this means that by construction we have to have $y \in \Delta^\mathcal{I}$ with $D \mathrel{\triangleright} c \in Y(y)$ and $R^\mathcal{I} (x,y) \mathrel{\triangleright} c$.
		\end{enumerate}
	\end{proof}
	\begin{lemma}\label{lem:CxSatInx}
		Let $\mathcal{I}$ be the interpretation extracted from a consistent marking $G_c$ for a tableau $G$. For every $x \in \Delta^\mathcal{I}$ and every $C \mathrel{\bowtie} c \in Y(x)$ we have $C^\mathcal{I}(x) \mathrel{\bowtie} c$.
	\end{lemma}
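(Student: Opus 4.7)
The plan is to do a structural induction on the concept $C$, using Lemmas \ref{lem:satpath} and \ref{lem:rulesapp} to transport premise assertions in $Y(x)$ to conclusion assertions in $Y(x)$ (or $Y(y)$ for a successor $y$), and then applying the semantic clauses. Throughout, the crucial facts are that $Y(x)$ is non-clashing (Lemma \ref{lem:satpath}(3)) and that for every non-atomic, non-constant, non-trivially-satisfied $C\mathrel{\bowtie} c\in Y(x)$ some conclusion of the unique applicable propositional rule must also lie in $Y(x)$ (Lemma \ref{lem:rulesapp}(1)--(3)).

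For the base cases, if $C=c'$ is a constant then $c'\mathrel{\bowtie} c\in Y(x)$ together with non-clashingness of $Y(x)$ (cf.\ $(\text{Ax }c)$) forces $c'\mathrel{\bowtie} c$ to hold as a numerical assertion, which is exactly $c^\mathcal{I}(x)\mathrel{\bowtie} c$. If $C=p\in\mathsf{N_C}$, the claim holds by the very choice of $p^\mathcal{I}(x)$ in Definition \ref{definition:extractedInterpretation}(2). For the inductive step on $\neg D$, I would invoke Lemma \ref{lem:rulesapp}(1) to obtain $D\mathrel{\bowtie^\circ} 1-c\in Y(x)$, apply the induction hypothesis to get $D^\mathcal{I}(x)\mathrel{\bowtie^\circ} 1-c$, and conclude by the semantic clause $(\neg D)^\mathcal{I}(x)=1-D^\mathcal{I}(x)$. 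For $D\sqcap D'$, I would split on $\mathrel{\bowtie}\in\{\mathrel{\triangleright},\mathrel{\triangleleft}\}$ and apply the corresponding $(\sqcap)$ rule together with the induction hypothesis; the minimum semantics then gives the conclusion in both directions (in the $\triangleleft$ case, only one of $D\mathrel{\triangleleft} c$, $D'\mathrel{\triangleleft} c$ need lie in $Y(x)$).

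The case $C=D\ominus c'$ splits on the direction of $\mathrel{\bowtie}$. For $\mathrel{\triangleleft}$, the rule $(\ominus\mathrel{\triangleleft})$ is always applicable, so both $D\mathrel{\triangleleft} c+c'$ and $c\mathrel{\triangleleft^\circ} 0$ lie in $Y(x)$; the second assertion (a constant comparison) forces $0\mathrel{\triangleleft} c$ by non-clashingness, and combined with the induction hypothesis this yields $\max(D^\mathcal{I}(x)-c',0)\mathrel{\triangleleft} c$. For $\mathrel{\triangleright}$, either the side condition $c\mathrel{\overline{\triangleright}} 0$ fails---in which case the assertion $(D\ominus c')^\mathcal{I}(x)\mathrel{\triangleright} c$ reduces to $(D\ominus c')^\mathcal{I}(x)\geq 0$, which is trivially true---or the rule applies and gives $D\mathrel{\triangleright} c+c'\in Y(x)$, from which $\max(D^\mathcal{I}(x)-c',0)\geq D^\mathcal{I}(x)-c'\mathrel{\triangleright} c$ by the induction hypothesis.

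The main obstacle is the case $C=\exists R.D$, especially for universal restrictions. For $\exists R.D\mathrel{\triangleright} c$, a similar triviality handles the case $c\mathrel{\overline{\triangleright}} 0$ fails; otherwise Lemma \ref{lem:rulesapp}(4) delivers some $y\in\Delta^\mathcal{I}$ with $D\mathrel{\triangleright} c\in Y(y)$ and $R^\mathcal{I}(x,y)\mathrel{\triangleright} c$, and the induction hypothesis gives $D^\mathcal{I}(y)\mathrel{\triangleright} c$, so the sup in the semantics is bounded below by $\min(R^\mathcal{I}(x,y),D^\mathcal{I}(y))\mathrel{\triangleright} c$. For $\exists R.D\mathrel{\triangleleft} c$, the argument relies directly on the construction of $R^\mathcal{I}(x,y)$: for every $y\in\Delta^\mathcal{I}$, the defining property of $e=R^\mathcal{I}(x,y)$ applied to $\exists R.D\mathrel{\triangleleft} c\in\operatorname{FA}_R(x)$ (noting $\exists R.D\mathrel{\triangleleft} c\in Y(x)$ forces it to also lie in $\mathcal{L}^G(x)$, hence in $\operatorname{FA}_R(x)$, by Lemma \ref{lem:satpath}(1) since $x$ is an AND-node) forces either $R^\mathcal{I}(x,y)\mathrel{\triangleleft} c$ or $D\mathrel{\triangleleft} c\in Y(y)$, the latter giving $D^\mathcal{I}(y)\mathrel{\triangleleft} c$ by the induction hypothesis; either way, $\min(R^\mathcal{I}(x,y),D^\mathcal{I}(y))\mathrel{\triangleleft} c$. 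The sup is over a finite set (all values lie in the finite set $Z'$, and only finitely many distinct values of $\min(R^\mathcal{I}(x,y),D^\mathcal{I}(y))$ occur), so the bound transfers to the sup even when $\mathrel{\triangleleft}$ is strict, yielding $(\exists R.D)^\mathcal{I}(x)\mathrel{\triangleleft} c$.
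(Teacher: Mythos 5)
Your proposal is correct and follows essentially the same route as the paper's proof: structural induction on $C$, using Lemma~\ref{lem:satpath} for non-clashingness of $Y(x)$ in the base cases, Lemma~\ref{lem:rulesapp}(1)--(4) to push rule conclusions into $Y(x)$ (resp.\ $Y(y)$), and the defining property of $e=R^\mathcal{I}(x,y)$ from Definition~\ref{definition:extractedInterpretation}(3) for universal restrictions. One small remark: your parenthetical justification that the values lie in $Z'$ is not supported by the construction (the extracted interpretation only requires the chosen values to satisfy the relevant assertions), but the point you need---that the supremum is a maximum, so strict bounds transfer---already follows from the finiteness of $\Delta^\mathcal{I}=A^G\cap V^{G_c}$.
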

	\begin{proof}
		We prove this by induction:
		\begin{itemize}[wide]
			\item Induction base: Let $x \in \Delta^\mathcal{I}$ and $C=d$. Then
			$d \mathrel{\bowtie} c \in Y(x)$ means $d \mathrel{\bowtie} c \in \mathcal{L}^G(v)$
			for some $v \in \operatorname{SP}(x)$. If this was not
			satisfiable, then we would have a clashing assertion in
			$v$. Contradiction by Lemma \ref{lem:satpath} (2). Now let
			$C=p$. Then $p^\mathcal{I}(x) \mathrel{\bowtie} c \in Y(x)$ is satisfied by
			construction.
									\item Induction step:
			Let $x \in \Delta^\mathcal{I}$ and $C \mathrel{\bowtie} c \in Y(x)$. We are trying to prove $C^\mathcal{I}(x) \mathrel{\bowtie} c$. We then have some $v \in \operatorname{SP}(x)$ with $C \mathrel{\bowtie} c \in \mathcal{L}^G(v)$.
			\begin{itemize}[wide]
				\item By Lemma \ref{lem:rulesapp} (1)-(3) we have that for all cases where $C$ is neither a constant, atomic concept or $C = \exists R . D$ and $C \mathrel{\bowtie} c$ is not $D \ominus d \mathrel{\triangleright} c$ with $c \mathrel{\triangleright^\circ} 0$ that either the $(\sqcap \mathrel{\triangleright}), (\sqcap \mathrel{\triangleleft}), (\lnot \mathrel{\bowtie}), (\ominus, \mathrel{\triangleleft})$ or $(\ominus, \mathrel{\triangleright})$ rules were applied to $C \mathrel{\bowtie} c$ along a $G_c$-saturation path and that at least one of its conclusions are also in $Y(x)$. This means we can immediately conclude the cases of $C = \lnot D$, $C =  D \ominus d$ and $C = D \sqcap E$ by using the induction hypothesis on the conclusion that is in $Y(x)$. The case where $C \mathrel{\bowtie} c$ is $D \ominus d \mathrel{\triangleright} c$ with $c \mathrel{\triangleright^\circ} 0$ are trivially satisfied in any interpretation and individual.
				\item We also have by Lemma \ref{lem:rulesapp} (4) that if $C \mathrel{\bowtie} c = \exists R . D \mathrel{\triangleright} c$ then we have some $y \in \Delta^\mathcal{I}$ with $D \mathrel{\triangleright} c \in Y(y)$ and $R^\mathcal{I} (x,y) \mathrel{\triangleright} c$. Using the induction hypothesis again (this time on $y$ and  $D \mathrel{\triangleright} c$) we can immediately conclude this case.
				\item Case $C \mathrel{\bowtie} c = \exists R . D \mathrel{\triangleleft} c$:
				We have to check that for all $y \in \Delta^\mathcal{I}$ we have $R^\mathcal{I} (x,y) \mathrel{\triangleleft} c$ or $D^\mathcal{I}(y) \mathrel{\triangleleft} c$.
				Now let $y \in \Delta^\mathcal{I}$. Then by construction, if we do not have any $\exists R . F \mathrel{\triangleright} f \in Y(x)$ with $c \blacktriangleleft_{(\mathrel{\triangleleft}, \mathrel{\triangleright})} f$, we would have $R^\mathcal{I} (x,y) \mathrel{\triangleleft} c$ and would be done. On the other hand if we have $R^\mathcal{I} (x,y) \mathrel{\overline{\triangleleft}^\circ} c$, that means we have $\exists R . F \mathrel{\triangleright} f \in \operatorname{EX}_{R} (x,y)$ such that $f \mathrel{\overline{\triangleleft}^\circ} c$. We then have $\exists R . F \mathrel{\triangleright} f \in \mathcal{L}^G(x)$ and $\exists R . D \mathrel{\triangleleft} c \in \mathcal{L}^G(x)$. This means that for some $w_k$ along a $G_c$-saturation path $(w_0, \ldots, w_n=y)$ of the consistent marking and thus the tableau we have $F \mathrel{\triangleright} f \in \mathcal{L}^G(w_k)$ and $\exists R . F \mathrel{\triangleright} f \in \operatorname{EX}_{R}(x, w_k)$. Looking at the $(\exists R)$ rule, we then also have to have $D \mathrel{\triangleleft} c \in \mathcal{L}^G(w_k)$ and as such $D \mathrel{\triangleleft} c \in Y(y)$. Using the induction hypothesis on $y$ and $D \mathrel{\triangleleft} c$ then gives us $D^\mathcal{I}(y) \mathrel{\triangleleft} c$.
			\end{itemize}
		\end{itemize}
	\end{proof}
				Next we have the following claim in Theorem \ref{tableau:soundness}:
	\begin{lemma}\label{lem:sound}
		Each non-axiom tableau rule creates a (or all in the case of the exists rule) $\mathcal{T}$-satisfiable conclusion from a $\mathcal{T}$-satisfiable premiss.
	\end{lemma}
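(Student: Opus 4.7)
The plan is a case analysis on the non-axiom rules. For each rule we fix an interpretation $\mathcal{I}$ satisfying $\mathcal{T}$ together with an individual $x \in \Delta^\mathcal{I}$ at which the premise holds, and produce an interpretation and an individual witnessing at least one conclusion for the propositional (OR-)rules, or witnessing every conclusion of each application for the AND-rule $(\exists R)$. Invoking Lemma \ref{lemma:interpretationrestriction2} at the outset, we may assume without loss of generality that the atomic and role values of $\mathcal{I}$ lie in the finite set $Z'$ and that the associated assertion $T \geq 1$ holds at every individual.

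The propositional cases are dispatched by unfolding the semantics at $x$ with the witness kept fixed. For $(\sqcap \mathrel{\triangleright})$ and $(\sqcap \mathrel{\triangleleft})$ this reduces to $\min(a, b) \mathrel{\triangleright} c$ iff both $a, b \mathrel{\triangleright} c$, respectively $\min(a, b) \mathrel{\triangleleft} c$ iff at least one of $a, b$ is $\mathrel{\triangleleft} c$. For $(\lnot \mathrel{\bowtie})$, the identity $(1 - a) \mathrel{\bowtie} c$ iff $a \mathrel{\bowtie^\circ} (1 - c)$ is immediate. For $(\ominus \mathrel{\triangleleft})$, the point is that $\max(a - c, 0) \mathrel{\triangleleft} d$ forces simultaneously $d \mathrel{\triangleleft^\circ} 0$ and $a \mathrel{\triangleleft} d + c$, yielding precisely the single conclusion. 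For $(\ominus \mathrel{\triangleright})$, the side condition $d \mathrel{\overline{\triangleright}} 0$ rules out the vacuous case in which $\max(a - c, 0) \mathrel{\triangleright} d$ holds only because the max is $0$, so under this condition the max must equal $a - c$ and we can infer $a \mathrel{\triangleright} d + c$.

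The main obstacle is the $(\exists R)$-rule. From $(\exists R . C)^\mathcal{I}(x) \mathrel{\triangleright} c$ (with $c \mathrel{\overline{\triangleright}} 0$) we need a successor $y \in \Delta^\mathcal{I}$ with $\min(R^\mathcal{I}(x, y), C^\mathcal{I}(y)) \mathrel{\triangleright} c$; because values lie in the finite set $Z'$, the supremum defining $(\exists R . C)^\mathcal{I}(x)$ is in fact a maximum and is attained, delivering such $y$ even for $\mathrel{\triangleright} = \geq$. Since $\mathcal{I}$ satisfies $\mathcal{T}$, also $T \geq 1$ holds at $y$, so the conjuncts $C \mathrel{\triangleright} c$ and $T \geq 1$ of the conclusion are satisfied at $y$. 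It remains to verify each $D_j \mathrel{\triangleleft_j} d_j$ listed in the conclusion, namely those with $d_j \mathrel{\blacktriangleleft_{(\triangleleft_j, \triangleright)}} c$. The premise $(\exists R . D_j)^\mathcal{I}(x) \mathrel{\triangleleft_j} d_j$ yields, for every $z \in \Delta^\mathcal{I}$, either $R^\mathcal{I}(x, z) \mathrel{\triangleleft_j} d_j$ or $D_j^\mathcal{I}(z) \mathrel{\triangleleft_j} d_j$. Specialising to $z = y$ and combining $R^\mathcal{I}(x, y) \mathrel{\triangleright} c$ with $d_j \mathrel{\blacktriangleleft_{(\triangleleft_j, \triangleright)}} c$, a short case check over the four strict/non-strict pairings of $(\triangleleft_j, \triangleright)$ rules out the $R$-disjunct and forces $D_j^\mathcal{I}(y) \mathrel{\triangleleft_j} d_j$, as required.
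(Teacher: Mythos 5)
Your proof is correct and follows essentially the same case analysis as the paper's: unfold the semantics of each connective at the fixed witness for the propositional rules, and extract a successor witnessing each conclusion of $(\exists R)$. You are in fact somewhat more careful than the paper on the $(\exists R)$ case, where your up-front appeal to the reduction of Lemma~\ref{lemma:interpretationrestriction2} is what legitimately guarantees both that the supremum defining $(\exists R.C)^{\mathcal I}(x)$ is attained when $\triangleright$ is $\geq$ and that $T \geq 1$ holds at the chosen successor -- two points the paper's proof leaves implicit.
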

	\begin{proof}
		\begin{itemize}[wide]
			\item Case $(\sqcap \mathrel{\triangleright})$: If we have that $S, D \sqcap E \mathrel{\triangleright} d$ is $\mathcal{T}$-satisfiable, then that means there must be an interpretation $\mathcal{I}$ and an individual $x \in \Delta^\mathcal{I}$, where all concept assertions of $S$ and $D \sqcap E \mathrel{\triangleright} d$ are satisfied by $x$ under the TBox $\mathcal{T}$. However if $(D \sqcap E)^\mathcal{I}(x) \mathrel{\triangleright} d$ then by the very definition of $\sqcap$ both $D^\mathcal{I}(x) \mathrel{\triangleright} d$ and $E^\mathcal{I}(x) \mathrel{\triangleright} d$. Thus the conclusion of the rule is $\mathcal{T}$-satisfiable in the same interpretation and for the same individual.
			\item Case $(\sqcap \mathrel{\triangleleft})$: If we have that $S, D \sqcap E \mathrel{\triangleleft} d$ is $\mathcal{T}$-satisfiable, then that means there must be an interpretation $\mathcal{I}$ and an individual $x \in \Delta^\mathcal{I}$, where all concept assertions of $S$ and $D \sqcap E \mathrel{\triangleleft} d$ are satisfied by $x$ under the TBox $\mathcal{T}$. However if $(D \sqcap E)^\mathcal{I}(x) \mathrel{\triangleleft} d$ then by the very definition of $\sqcap$ we have $D^\mathcal{I}(x) \mathrel{\triangleleft} d$ or $E^\mathcal{I}(x) \mathrel{\triangleleft} d$ or both. Thus at least one of the conclusions of the rule is also $\mathcal{T}$-satisfiable in the same interpretation and for the same individual.
			\item Case $(\lnot \mathrel{\bowtie})$: If we have that $S, \lnot D \mathrel{\bowtie} d$ is $\mathcal{T}$-satisfiable, then that means there must be an interpretation $\mathcal{I}$ and an individual $x \in \Delta^\mathcal{I}$, where all concept assertions of $S$ and $\lnot D \mathrel{\bowtie} d$ are satisfied by $x$ under the TBox $\mathcal{T}$. However if $(\lnot D)^\mathcal{I}(x) \mathrel{\bowtie} d$ then by the very definition of $\lnot$ we have $D^\mathcal{I}(x) \mathrel{\bowtie^\circ} 1-d$. Thus the conclusion of the rule is $\mathcal{T}$-satisfiable in the same interpretation and for the same individual.
			\item Case $(\ominus \mathrel{\bowtie})$: If we have that $S, D \ominus d \mathrel{\bowtie} e$ is $\mathcal{T}$-satisfiable, then that means there must be an interpretation $\mathcal{I}$ and an individual $x \in \Delta^\mathcal{I}$, where all concept assertions of $S$ and $D \ominus d \mathrel{\bowtie} e$ are satisfied by $x$ under the TBox $\mathcal{T}$.
			We then have by the definition of $\ominus$ that $(D \ominus d)^\mathcal{I}(x) \mathrel{\bowtie} e \leftrightarrow \max(D^\mathcal{I}(x)-d,0) \mathrel{\bowtie} e$.
			\begin{itemize}[wide]
				\item Case $\mathrel{\bowtie} = \mathrel{\triangleleft}$: Then $e \mathrel{\triangleleft^\circ} 0$ otherwise the concept assertion would never be $\mathcal{T}$-satisfiable. We also trivially have $(D \ominus d)^\mathcal{I}(x) \geq D^\mathcal{I}(x) - d$ and as such $D^\mathcal{I}(x) \mathrel{\triangleleft} e + d$.
				\item Case $\mathrel{\bowtie} = \mathrel{\triangleright}$: Because of the condition $e \mathrel{\overline{\triangleright}} 0$ we have to have $D^\mathcal{I}(x) > d$ in order for the assertion to be $\mathcal{T}$-satisfiable and the $(\ominus \mathrel{\triangleright})$ rule to be applicable. Then we have $(D \ominus d)^\mathcal{I}(x) = D^\mathcal{I}(x) - d$ and as such $D^\mathcal{I}(x) \mathrel{\triangleright} e + d$.
			\end{itemize}
			Thus the conclusions of the rules are $\mathcal{T}$-satisfiable in the same interpretation and for the same individual.
			\item Case axioms: If any axiom rule is applicable, then the premiss was also not $\mathcal{T}$-satisfiable. Contradiction.
			\item Case $(\exists R . \mathrel{\triangleright})$: If we have that $S, \{ \exists R . E_j \mathrel{\triangleleft}_j  e_j \mid 1 \leq j \leq n\}, \exists R . D \mathrel{\triangleright} d$ is $\mathcal{T}$-satisfiable and $d \mathrel{\overline{\triangleright}} 0$, that means there must be an interpretation $\mathcal{I}$ and an individual $x \in \Delta^\mathcal{I}$, where all concept assertions of $S$ and $\{ \exists R . E_j \mathrel{\triangleleft}_j  e_j \mid 1 \leq j \leq n\}, \exists R . D \mathrel{\triangleright} d$ are satisfied by $x$ under the TBox $\mathcal{T}$. If we have for all $1 \leq j \leq n$ that $(\exists R . E_j)^\mathcal{I}(x) \mathrel{\triangleleft}_j e_j$ and $(\exists R . D)^\mathcal{I}(x) \mathrel{\triangleright} d$, that means there must be an individual $y \in \Delta^\mathcal{I}$ with $R^\mathcal{I} (x,y) \mathrel{\triangleright} d$, $D^\mathcal{I}(y) \mathrel{\triangleright} d$ and $E_j^\mathcal{I}(y) \mathrel{\triangleleft}_j e_j$ for at least all $j \in \{1, \ldots, n\}$ with $e_j \blacktriangleleft_{(\mathrel{\triangleleft}_j, \mathrel{\triangleright})} d$ under the TBox $\mathcal{T}$. The other universal restrictions can be trivially satisfied by the value of $R^\mathcal{I} (x,y)$. This is exactly the conclusion of the rule, so the conclusions also have to be $\mathcal{T}$-satisfiable.
		\end{itemize}
	\end{proof}
				We also had the following claim in the proof of Theorem \ref{theorem:exptime}:
	\begin{lemma}\label{lemma:labelsEstimate}
		Let $\Gamma$ be a sequent, $\mathcal{T}$ a TBox and $T \geq 1$ the associated concept assertion to $\mathcal{T}$ and $\Gamma$.
		Then there are at most $2^{O(n^3)}$ possible labels for a tableau of $\Gamma$ and $T \geq 1$, where $n$ is the syntactic size of $\Gamma$ and a TBox $\mathcal{T}$.
	\end{lemma}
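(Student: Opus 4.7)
The plan is to bound the number of distinct concept assertions that can appear in a label constructed by tableau expansion, and then exponentiate to bound the number of possible labels.

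First I would observe that every assertion occurring in any label arises from repeated applications of the propositional rules (and the stripping performed by $(\exists R)$) starting from assertions in $\Gamma \cup \{T \geq 1\}$. In every case, the concept part of a derived assertion is a subconcept of one of the initial concepts, and the comparison operator together with the right-hand side are fully determined by the path from the initial concept to that subconcept in the syntax tree: each $\lnot$ on the path implements $c \mapsto 1-c$ and flips $\triangleleft \leftrightarrow \triangleright$; each $\ominus c_j$ on the path adds $c_j$ to the right-hand side; while $\sqcap$ and $\exists R$ leave the right-hand side unchanged. Since the path in the syntax tree from an initial concept to any of its subconcepts is unique, each subconcept of each initial concept contributes at most one derived assertion.

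Given this, I would estimate the two sources of assertions separately. From $\Gamma$, the total number of subconcepts across all initial assertions is bounded by the syntactic size, so $O(n)$ distinct derived assertions can arise from $\Gamma$, contributing $2^{O(n)}$ possible labels. For $T = \bigsqcap_i \bigsqcup_{z \in Z'} (\lnot C_i \oplus z) \sqcap (D_i \oplus (1-z))$, I would treat each conjunct separately. The subconcepts of a single conjunct consist of the $O(|Z'|)$ internal and external nodes of the binary tree for $\bigsqcup$; for each $z \in Z'$, the disjunct $(\lnot C_i \oplus z) \sqcap (D_i \oplus (1-z))$ together with its immediate subterms $\lnot C_i \oplus z$ and $D_i \oplus (1-z)$; and the $O(n)$ subconcepts of $C_i$ and $D_i$ shared across all $z$. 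Bounding $|Z'|$ by $O(n)$, this gives $O(n^2)$ subconcepts per conjunct, hence $2^{O(n^2)}$ possible labels per conjunct. With at most $O(n)$ GCIs, the labels contributed by the $T$-component are bounded by $(2^{O(n^2)})^{O(n)} = 2^{O(n^3)}$, and combining with the $\Gamma$-contribution yields $2^{O(n^3)}$ in total.

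The main obstacle is the careful accounting of all subconcepts of $T$, in particular the accurate bounding of $|Z'|$ and the interplay between the binary tree expansion of the big disjunction and the shared subconcepts of $C_i$ and $D_i$. A secondary subtlety is verifying that the unique-path argument for right-hand sides holds in the presence of branching rules such as $(\sqcap \mathrel{\triangleleft})$: although such rules split into alternative conclusions distributed across different successor labels, each individual derived assertion is still uniquely determined by a choice of initial assertion together with a path in the syntax tree, which is what the counting argument requires.
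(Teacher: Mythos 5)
Your first half --- the observation that each derived assertion's comparison operator and right-hand side are determined by the unique path from an initial concept to the relevant subconcept, so that $\Gamma$ contributes only $O(n)$ distinct assertions and hence $2^{O(n)}$ labels --- matches the paper's argument and is fine. The treatment of the $T$-component, however, has a genuine gap, and it sits exactly where you flag "the accurate bounding of $|Z'|$" as the main obstacle. The set $Z$ is the intersection of $[0,1]$ with the additive subgroup generated by $1$ and the constants of $\Gamma$ and $\mathcal{T}$, and constants are encoded in binary: a single constant $\frac{1}{b}$ of syntactic size $O(n)$ already forces $Z \supseteq \{0, \frac{1}{b}, \frac{2}{b}, \ldots, 1\}$, so $|Z'|$ is $2^{\Omega(n)}$, not $O(n)$. (The paper only bounds $|Z'|$ by $2^{O(n^2)}$, via the least common multiple of the denominators.) With the correct, exponential size of $Z'$, your strategy of counting the distinct derivable assertions per conjunct and then taking all subsets gives $2^{O(n^2)}$ assertions and hence $2^{2^{O(n^2)}}$ candidate labels per conjunct --- doubly exponential, which destroys the claimed bound.

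The missing idea is a structural restriction on which \emph{sets} of assertions stemming from one conjunct can actually co-occur in a reachable label. The big disjunction $\bigsqcup_{z\in Z'}(\lnot C_i \oplus z)\sqcap(D_i\oplus(1-z))$ is decomposed through the $\neg/\sqcap$ encoding of $\sqcup$, and the branching rule $(\sqcap\mathrel{\triangleleft})$ commits to one disjunct at a time; consequently (as the paper argues) a label contains, per GCI, either a tail $\bigsqcup_{z\in U}(\cdots)\geq 1$ with $U$ a suffix of the ordered $Z'$ or a singleton $U=\{z\}$ together with subconcepts of that one disjunct --- never an arbitrary subset of the $2^{O(n^2)}$ disjuncts. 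This caps the reachable states per conjunct at $2^{O(n^2)}$, and only then does multiplying over the $O(n)$ GCIs yield $2^{O(n^3)}$. Without this reachability argument, counting subconcepts and exponentiating cannot give the stated bound.
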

	\begin{proof}
		When constructing $G$, we have that for any subconcept of a concept assertion in the current node there can only be one possible comparison operator and constant on the right hand side associated with this subconcept (not counting instances of the same subconcept in another place), as the propositional rules always simplify a concept when changing the comparison operator or constant and for each concept assertion there is at most one propositional rule applicable at all times. 
		The exists rules even keep their comparison operator and constants, so they do not give rise to any other possible comparison operators and constants associated with the subconcept. 
		The only possible way to have some subconcept associated with multiple pairs of comparators and constants is if the subconcept shows up multiple times in the concept. 
		However this still leaves us with at most $2^{O(m)}$ possible labels made up of just subconcept assertions from $\Gamma$, where $m$ is the syntactic size of $\Gamma$. We next have to deal with the possible labels of subconcept assertions of the TBox: 
		Let $n$ be the syntactic size of $\Gamma$ and $\mathcal{T}$. Then there are $O(n)$ constants, each having a numerator and denominator of binary length $O(n)$. The least common multiple of their denominators then has binary length $O(n^2)$ which gives us that the size of the additive group $Z$ of $1$ and constants in $\Gamma$ and $\mathcal{T}$ is of size $2^{O(n^2)}$. Let $\epsilon := \frac{1}{2} \min Z \setminus \{0\}$ then $Z' := Z \cup \{z+\epsilon \mid z \in Z \setminus \{1\}\}$ is also of size $2^{O(n^2)}$. Looking at the rules of the tableau, for each GCI $C \sqsubseteq D$ a label either has some subconcepts of $(\lnot C \oplus z) \sqcap (D \oplus (1-z)) \geq 1$ for one $z \in Z'$ or for a subset $U \subseteq Z'$ the concept assertion $\sqcup_{z \in U}(\lnot C \oplus z) \sqcap (D \oplus (1-z)) \geq 1$. We can however further restrict the possible choices of $U$ by investigating the $(\sqcap \mathrel{\triangleleft})$ rule and noticing that we only have subsets $U_z=\{z\}$ for $z \in Z'$ and $U^c=\{z \mid z \in Z', z \geq c\}$ for constants $c \in Z'$, where we supposed without loss of generality that the disjunction in the concept assertion $\sqcup_{z \in Z'}(\lnot C \oplus z) \sqcap (D \oplus (1-z)) \geq 1$ was ordered based on the ordering of $Z'$. Both of these cases yield $2^{O(n^2)}$ possible labels resulting from the GCI $C \sqsubseteq D$. This then leaves us with $2^{O(n^3)}$ possible labels for all GCIs of $\mathcal{T}$ and multiplying with the $2^{O(m)} \leq 2^{O(n)}$ possible labels resulting from $\Gamma$ we have that there are at most $2^{O(n^3)}$ possible labels in a tableau.
	\end{proof}
	Finally, we had the following claim in the proof of Theorem \ref{theorem:hardness}:
	\begin{lemma}
		Let $\mathcal{I}$ and $\mathcal{I}'$ be the interpretations from the proof of Theorem \ref{theorem:hardness}. Then we have $C^{\mathcal{I}'}(x) > 0.5 \leftrightarrow x \in C^\mathcal{I}$ for all concepts $C$ of regular $\ALC$.
	\end{lemma}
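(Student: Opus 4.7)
The plan is to proceed by structural induction on the classical $\ALC$ concept $C$. Since $\ALC$ uses only the operators $\neg$, $\sqcap$, and $\exists R$ on top of atomic concepts (with the usual duals derived via $\neg$), the Zadeh-style semantics on the side of $\mathcal{I}'$ unfolds cleanly into standard set-theoretic operations on the side of $\mathcal{I}$, and no constants or shift operators need to be considered. The base case $C = p$ is immediate from the construction of $\mathcal{I}$, which sets $x \in p^\mathcal{I}$ iff $p^{\mathcal{I}'}(x) > 0.5$.

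For the inductive step I would dispatch the easy cases first. For $C = D \sqcap E$, the identity $\min(a,b) > 0.5$ iff $a > 0.5$ and $b > 0.5$, together with the induction hypothesis applied to $D$ and $E$, yields the required equivalence with $x \in (D \sqcap E)^\mathcal{I}$. For $C = \exists R . D$, I would use the elementary fact that $\sup_y f(y) > 0.5$ iff $f(y) > 0.5$ for some $y$ (an immediate consequence of the definition of supremum), combined with the definition of $R^\mathcal{I}$ as the $0.5$-threshold of $R^{\mathcal{I}'}$ and the induction hypothesis on $D$, to obtain the existence of an $R$-successor $y$ with $(x,y) \in R^\mathcal{I}$ and $y \in D^\mathcal{I}$.

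The main obstacle lies in the negation case, where $(\neg D)^{\mathcal{I}'}(x) = 1 - D^{\mathcal{I}'}(x) > 0.5$ is equivalent to $D^{\mathcal{I}'}(x) < 0.5$, while the induction hypothesis only yields that $x \in (\neg D)^\mathcal{I}$ iff $D^{\mathcal{I}'}(x) \leq 0.5$. Closing this gap requires ruling out the boundary value $D^{\mathcal{I}'}(x) = 0.5$. The plan here is to strengthen the induction hypothesis so that, alongside the biconditional, it also asserts $C^{\mathcal{I}'}(x) \neq 0.5$ for every $\ALC$ subconcept $C$; this in turn reduces to the corresponding assumption on the atomic-concept and role values of $\mathcal{I}'$. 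That assumption can be enforced without loss of generality by pre-processing $\mathcal{I}'$ into a Booleanised interpretation whose atomic and role values lie in $\{0,1\}$ (setting each value to $1$ if the original exceeds $0.5$ and to $0$ otherwise), so that every $\ALC$ subconcept becomes automatically Boolean-valued and the boundary case simply does not arise. One then verifies that this Booleanisation preserves satisfaction of $\Gamma'$ (whose assertions have the stringent form $C \geq 1$, pinning the witness values to $1$) and of the classical TBox $\mathcal{T}$, after which the main induction runs without any further subtlety.
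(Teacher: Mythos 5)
Your structural induction matches the paper's own proof for atoms, $\sqcap$, and $\exists R$, and you have correctly put your finger on the one place where the printed argument is not airtight: in the negation case the paper silently passes from $(\lnot D)^{\mathcal{I}'}(x)>0.5$ to $D^{\mathcal{I}'}(x)\leq 0.5$, whereas the semantics only gives $D^{\mathcal{I}'}(x)<0.5$. The biconditional genuinely fails at the boundary: if $p^{\mathcal{I}'}(x)=0.5$ then $x\in(\lnot p)^{\mathcal{I}}$ but $(\lnot p)^{\mathcal{I}'}(x)=0.5\not>0.5$. So your diagnosis of the obstacle is more careful than the paper's treatment of it.

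However, your proposed repair does not go through. Booleanising $\mathcal{I}'$ (sending atomic and role values to $1$ if above $0.5$ and to $0$ otherwise) does preserve $\Gamma'$, whose assertions pin the witness values to $1$, but it does \emph{not} preserve the TBox: for $\mathcal{T}=\{\lnot p\sqsubseteq p\}$ and $p^{\mathcal{I}'}(x)=0.5$ the GCI holds in $\mathcal{I}'$ (as $0.5\leq 0.5$) yet fails in the Booleanised interpretation, where $\lnot p$ evaluates to $1$ and $p$ to $0$ at $x$. Moreover, no preprocessing of $\mathcal{I}'$ can close this gap, because the claim underlying the reduction is itself false at the boundary: the TBox $\{p\sqsubseteq\lnot p,\ \lnot p\sqsubseteq p\}$ has a fuzzy model (constant value $0.5$) but no classical model, so fuzzy $\mathcal{T}'$-satisfiability of $\Gamma'$ does not imply classical $\mathcal{T}$-satisfiability of $\Gamma$ under the translation $\mathcal{T}':=\mathcal{T}$. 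What the induction does deliver is the pair of one-sided implications $C^{\mathcal{I}'}(x)>0.5\Rightarrow x\in C^{\mathcal{I}}$ and $C^{\mathcal{I}'}(x)<0.5\Rightarrow x\notin C^{\mathcal{I}}$, which suffices for $\Gamma'$ but not for GCIs whose two sides both evaluate to exactly $0.5$; a correct argument therefore has to change the reduction itself (for instance by using the $\ominus$/$\oplus$ operators available in non-expansive fuzzy $\ALC$ to force the relevant values away from $0.5$) rather than normalising the model.
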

	\begin{proof}
		We prove this by induction:
		\begin{itemize}[wide]
			\item Induction base: The case for atoms is clear by the definition. The case for constants is trivial.
			\item Case $C = D \sqcap E$: We have $C^{\mathcal{I}'}(x) > 0.5 \leftrightarrow D^{\mathcal{I}'}(x) > 0.5, E^{\mathcal{I}'}(x) > 0.5$. Using the induction hypothesis we have $C^{\mathcal{I}'}(x) > 0.5 \leftrightarrow x \in D^\mathcal{I}, x \in E^\mathcal{I}$ and as such $C^{\mathcal{I}'}(x) > 0.5 \leftrightarrow x \in C^\mathcal{I}$.
			\item Case $C = \lnot D$: We have $C^{\mathcal{I}'}(x) > 0.5 \leftrightarrow D^{\mathcal{I}'}(x) \leq 0.5$. Using the induction hypothesis then gives us $C^{\mathcal{I}'}(x) > 0.5 \leftrightarrow x \notin D^\mathcal{I}$ and as such $C^{\mathcal{I}'}(x) > 0.5 \leftrightarrow x \in C^\mathcal{I}$.
			\item Case $C = \exists R . D$: We have $C^{\mathcal{I}'}(x) > 0.5 \leftrightarrow \exists y \in \Delta^{\mathcal{I}'}: R^{\mathcal{I}'}(x,y)>0.5, D^{\mathcal{I}'}(y)>0.5$. Using the definition of $\mathcal{I}$ and the induction hypothesis gives us $C^{\mathcal{I}'}(x) > 0.5 \leftrightarrow \exists y \in \Delta^\mathcal{I}: (x,y) \in R^\mathcal{I}, y \in D^\mathcal{I}$ and as such $C^{\mathcal{I}'}(x) > 0.5 \leftrightarrow x \in C^\mathcal{I}$.
		\end{itemize}
	\end{proof}
\end{document}